\newcommand{\T}{\textsf{T}}
\theoremstyle{plain}
\newtheorem{theorem}{Theorem}
\newtheorem{lemma}[theorem]{Lemma}
\theoremstyle{definition}
\newtheorem{example}[theorem]{Example}
\theoremstyle{remark}
\newtheorem*{remark}{Remark} % asterisk for unnumbered remarks in definition from amsthm.sty
\renewcommand{\P}{{\rm P}}
\title{Spatial Parrondo games \\ with spatially dependent game $A$}
\author{Sung Chan Choi\thanks{Department of Mathematics, University of Utah, 155 S. 1400 E., Salt Lake City, UT 84112, USA. e-mail: choi@math.utah.edu}}
\date{}							
\begin{document}
\maketitle

\begin{abstract}
Parrondo games with spatial dependence were introduced by Toral (2001) and have been studied extensively.  In Toral's model, $N$ players are arranged in a circle.  The players play either game $A$ or game $B$. In game $A$, a randomly chosen player wins or loses one unit according to the toss of a fair coin.  In game $B$, which depends on parameters $p_0,p_1,p_2\in[0,1]$, a randomly chosen player, player $x$ say, wins or loses one unit according to the toss of a $p_m$-coin, where $m\in\{0,1,2\}$ is the number of nearest neighbors of player $x$ who won their most recent game.  In this paper, we replace game $A$ by a spatially dependent game, which we call game $A'$, introduced by Xie et al.~(2011).  In game $A'$, two nearest neighbors are chosen at random, and one pays one unit to the other based on the toss of a fair coin.  Noting that game $A'$ is fair, we say that the \textit{Parrondo effect} occurs if game $B$ is losing or fair and game $C'$, determined by a random or periodic sequence of games $A'$ and $B$, is winning.  We investigate numerically the region in which the Parrondo effect appears.  We give sufficient conditions for the mean profit in game $C'$ to converge as $N\to\infty$.  Finally, we compare the Parrondo region in the model of Xie et al.\ with that in the model of Toral.
\end{abstract}

\section{Introduction}

The Parrondo effect, in which there is a reversal in direction in some system parameter when two similar dynamics are combined, is the result of an underlying nonlinearity.  It was first described by Spanish physicist J. M. R. Parrondo in 1996 in the context of games of chance: He showed that it is possible to combine two fair or losing games, $A$ and $B$, to produce a winning one, $C$.  Here $C$ is the game obtained by playing games $A$ and $B$ in a random or periodic sequence.  His motivation was to provide a discrete (in time and space) version of the so-called flashing Brownian ratchet of Ajdari and Prost \cite{AP92}.  Other versions of Parrondo's games followed, including Toral's \cite{T01} spatially dependent games.  These games were modified by Xie et al.~\cite{XC11}, and it is the goal of this paper to explore the latter games in greater depth than was done by Ethier and Lee \cite{EL15}.

\subsection{Toral's spatially dependent games}

Toral \cite{T01} introduced what he called \textit{cooperative} Parrondo games with spatial dependence.  (We prefer the term \textit{spatially dependent} Parrondo games so as to avoid conflict with the field of cooperative game theory.)  The games depend on an integer parameter $N\ge3$, the number of players, and four probability parameters, $p_0,p_1,p_2,p_3$.  (This is a slight generalization of the model described in the abstract.) The players are arranged in a circle and labeled from 1 to $N$ (so that players 1 and $N$ are adjacent).  At each turn, a player is chosen at random to play.  Suppose player $x$ is chosen.  In game $A$, he tosses a fair coin.  In game $B$, he tosses a $p_m$-coin (i.e., a coin whose probability of heads is $p_m$), where $m\in\{0,1,2,3\}$ depends on the winning or losing status of his two nearest neighbors.  A player's status as winner (1) or loser (0) is decided by the result of his most recent game.  Specifically,
$$
m=\begin{cases}0&\text{if $x-1$ and $x+1$ are both losers,}\\
               1&\text{if $x-1$ is a loser and $x+1$ is a winner,}\\
               2&\text{if $x-1$ is a winner and $x+1$ is a loser,}\\
               3&\text{if $x-1$ and $x+1$ are both winners,}\end{cases}
$$
where $N+1:=1$ and $0:=N$ because of the circular arrangement of players.  Player $x$ wins one unit with heads and loses one unit with tails.  Replacing $(p_0.p_1,p_2,p_3)$ by $(p_0,p_1,p_1,p_2)$ gives the 3-parameter model described in the abstract.

These games have been studied in detail in a series of papers by Ethier and Lee \cite{EL12a,EL12b,EL13a,EL13b}.  For example, with Toral's \cite{T01} choice of parameters, namely $(p_0,p_1,p_2,p_3)=(1,0.16,0.16,0.7)$, one can compute the asymptotic profit per turn to the set of $N$ players, for $3\le N\le 19$.  For $N=5,6$ and $9\le N\le19$, the Parrondo effect (where game $A$ is fair, game $B$ is fair or losing, and the random mixture, game $C:=\frac12 A+\frac12 B$, is winning) is present.  In the cited papers, a strong law of large numbers and a central limit theorem are obtained.  In particular, the asymptotic cumulative profits per turn exist and are the means in the SLLN.  Further, it seems clear that these means converges as $N\to\infty$.  This has been proved under certain conditions (see Ethier and Lee \cite{EL13a}).

\subsection{The spatially dependent games of Xie et al.}

Notice that Toral's \cite{T01} game $A$ is not spatially dependent (i.e., the rules of the game do not depend on the spatial structure of the players). Xie et al.~\cite{XC11} proposed a modification of game $A$ that \textit{is} spatially dependent as well as being a fair game.  To distinguish, we call that game $A'$.  As before, the games depend on an integer parameter $N\ge3$, the number of players, and four probability parameters, $p_0,p_1,p_2,p_3$.  The players are arranged in a circle and labeled from $1$ to $N$ (so that players $1$ and $N$ are adjacent).  At each turn, a player is chosen at random to play.  Suppose player $x$ is chosen.  In game $A'$, he chooses one of his two nearest neighbors at random and competes with that neighbor by tossing a fair coin.  The results is a transfer of one unit from one of the players to the other, hence the wealth of the set of $N$ players is unchanged.  Game $B$ is as before.  Player $x$ wins one unit with heads and loses one unit with tails. 

These games were studied by Xie et al.~\cite{XC11}, Li et al.~\cite{LY14}, and Ethier and Lee \cite{EL15}.  Only the random mixture case was treated, and convergence of the means has not yet been addressed.  Our aim in this paper is to fill in these gaps in the literature.  Further, we want to understand this model as well as Toral's model is understood.

We begin by establishing a strong law of large numbers and a central limit theorem, especially in the periodic pattern case.  We compute various means numerically and use computer graphics to visualize the Parrondo region.  Then we address the issue of convergence of means, which involves certain interacting particle systems.  We then establish the convergence, both in the random mixture setting and in the periodic pattern setting, on a large subset of the parameter space.

\section{SLLN/CLT for the games of Xie et al.}
       
In this section, we restate the strong law of large numbers (SLLN) and the central limit theorem (CLT) of Ethier and Lee \cite{EL09}, and we apply them to the Parrondo games of Xie at al.~\cite{XC11}.

\subsection{SLLN and CLT}\label{SLLN-CLT-EL}

Ethier and Lee \cite{EL09} proved an SLLN and a CLT for the Parrondo player's sequence of profits, motivated by the random mixture $C:=\gamma A+(1-\gamma) B$.  A subsequent version in the same paper treats the case of periodic patterns.

Consider an irreducible aperiodic Markov chain $\{X_n\}_{n\ge0}$ with finite state space $\Sigma_0$.  It evolves according to the one-step transition matrix ${\bm P}=(P_{ij})_{i,j\in\Sigma_0}$.  Let us denote its unique stationary distribution by the row vector ${\bm \pi}=(\pi_i)_{i\in \Sigma_0}$.  Let $w:\Sigma_0\times\Sigma_0\mapsto {\bf R}$ be an arbitrary function, which we write as a matrix ${\bm W}=(w(i,j))_{i,j\in\Sigma_0}$ and refer to as the \textit{payoff matrix}. Define the sequences $\{\xi_n\}_{n\ge1}$ and $\{S_n\}_{n\ge1}$ by
\begin{equation*}
\xi_n:=w(X_{n-1},X_n),\qquad n\ge1,
\end{equation*}
and
\begin{equation*}
S_n:=\xi_1+\cdots+\xi_n,\qquad n\ge1.
\end{equation*}
Let ${\bm \Pi}$ denote the square matrix each of whose rows is ${\bm \pi}$, and let ${\bm Z}:=({\bm I}-({\bm P}-{\bm \Pi}))^{-1}$ denote the \textit{fundamental matrix}.  Denote by $\dot{\bm P}$ and $\ddot{\bm P}$ the Hadamard (entrywise) products $\bm P\circ\bm W$ and $\bm P\circ\bm W\circ\bm W$ (so $\dot{P}_{ij}:=P_{ij}w(i,j)$ and $\ddot{P}_{ij}:=P_{ij}w(i,j)^2$).  Let $\bm 1:=(1,1,\ldots,1)^\T$ and define
\begin{equation*}
\mu:=\bm\pi\dot{\bm P}\bm 1\quad{\rm and}\quad\sigma^2:=\bm\pi\ddot{\bm P}\bm 1
-(\bm\pi\dot{\bm P}\bm 1)^2+2\bm\pi\dot{\bm P}(\bm Z-\bm\Pi)\dot{\bm P}\bm 1.
\end{equation*}

\begin{theorem}[Ethier and Lee \cite{EL09}]\label{SLLN}
Under the above assumptions, and with the distribution of $X_0$ arbitrary,  
$$
\frac{S_n}{n}\to \mu\;\;{\rm a.s.}
$$
and, if $\sigma^2>0$, 
$$
\frac{S_n-n\mu}{\sqrt{n\sigma^2}}\to_d N(0,1).
$$
If $\mu=0$ and $\sigma^2>0$, then $-\infty=\liminf_{n\to\infty}S_n<\limsup_{n\to\infty}S_n=\infty$ \emph{a.s.}
\end{theorem}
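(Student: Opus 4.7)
My plan follows the classical martingale approach to asymptotics of Markov additive functionals (Gordin's method). The strong law is almost free: the pair chain $\{(X_{n-1},X_n)\}_{n\ge1}$ inherits ergodicity from $\{X_n\}$ and has stationary distribution $\pi_i P_{ij}$, so the ergodic theorem for finite Markov chains gives
\[
\frac{S_n}{n}=\frac{1}{n}\sum_{k=1}^n w(X_{k-1},X_k)\longrightarrow\sum_{i,j}\pi_i P_{ij}\,w(i,j)=\bm\pi\dot{\bm P}\bm1=\mu\quad\text{a.s.,}
\]
for every initial distribution.

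For the CLT I would solve a Poisson equation. Set $f(i):=\sum_j P_{ij}w(i,j)$, so that $E[\xi_k\mid X_{k-1}]=f(X_{k-1})$ and $\bm\pi f=\mu$. Since $\bm P-\bm\Pi$ has spectral radius strictly less than one under irreducibility and aperiodicity, the fundamental matrix $\bm Z=(\bm I-(\bm P-\bm\Pi))^{-1}$ is well defined, and the standard identities $\bm\pi\bm Z=\bm\pi$, $\bm Z\bm1=\bm1$, and $(\bm I-\bm P)\bm Z=\bm I-\bm\Pi$ make it routine to check that $h:=\bm Z(f-\mu\bm1)$ satisfies $(\bm I-\bm P)h=f-\mu\bm1$. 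The telescoping identity
\[
\xi_k-\mu=\bigl[w(X_{k-1},X_k)+h(X_k)-f(X_{k-1})-(\bm Ph)(X_{k-1})\bigr]+h(X_{k-1})-h(X_k)
\]
then writes $S_n-n\mu=M_n+h(X_0)-h(X_n)$, where the bracketed term $D_k$ is a bounded martingale difference with respect to $\mathcal F_k:=\sigma(X_0,\ldots,X_k)$ and $M_n:=\sum_{k=1}^n D_k$. The boundary correction is uniformly bounded, so it contributes $o(\sqrt n)$.

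Since $\{D_k\}$ is bounded, the martingale CLT applies to $M_n$ provided that $n^{-1}\sum_{k=1}^n E[D_k^2\mid\mathcal F_{k-1}]\to\sigma^2$ almost surely. A direct expansion gives
\[
E[D_k^2\mid X_{k-1}=i]=\sum_j P_{ij}\bigl(w(i,j)+h(j)\bigr)^2-\bigl(f(i)+(\bm Ph)(i)\bigr)^2,
\]
and a second application of the ergodic theorem replaces the time average by the $\bm\pi$-average of the right-hand side. Using $\bm Ph=h-(f-\mu\bm1)$ and $\bm\pi h=0$ collapses the second-sum average to $\bm\pi h^2+\mu^2$; substituting $h=\bm Z\dot{\bm P}\bm1-\mu\bm1$ in the cross term $2\bm\pi\dot{\bm P}h$ and noting $\bm\pi\dot{\bm P}\bm\Pi\dot{\bm P}\bm1=\mu^2$ should then reproduce precisely $\bm\pi\ddot{\bm P}\bm1-(\bm\pi\dot{\bm P}\bm1)^2+2\bm\pi\dot{\bm P}(\bm Z-\bm\Pi)\dot{\bm P}\bm1$. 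This algebraic identification is where I expect essentially all the real work to go; the martingale CLT itself is off the shelf.

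Finally, suppose $\mu=0$ and $\sigma^2>0$. The CLT forces $\P(S_n>M)\to1/2$ and $\P(S_n<-M)\to1/2$ for every $M>0$, whence Fatou gives $\P(\limsup_n S_n=+\infty)\ge1/2$ and $\P(\liminf_n S_n=-\infty)\ge1/2$. Both events lie in the tail $\sigma$-field of $\{X_n\}$ (write $\limsup_n S_n=S_k+\limsup_n(S_n-S_k)$ and let $k\to\infty$), which is trivial for a finite irreducible aperiodic chain, so each event has probability one.
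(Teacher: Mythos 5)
The paper itself offers no proof of this theorem---it is quoted verbatim from Ethier and Lee \cite{EL09}---and your argument is a correct instance of the standard route taken in that reference: Gordin's martingale decomposition via the Poisson equation $h=\bm Z(f-\mu\bm 1)$ (valid since $\bm\pi\bm Z=\bm\pi$ and $\bm Z\bm1=\bm1$), the bounded-difference martingale CLT with the ergodic theorem controlling the conditional variances, and tail-triviality of the finite irreducible aperiodic chain for the $\liminf/\limsup$ statement. The variance algebra you defer does close exactly as you predict, since $2\bm\pi\dot{\bm P}h=2\bm\pi\dot{\bm P}(\bm Z-\bm\Pi)\dot{\bm P}\bm 1$ and the remaining terms collapse to $\bm\pi\ddot{\bm P}\bm 1-\mu^2$, reproducing the stated $\sigma^2$.
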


\begin{example}
To illustrate this theorem, let us consider the original capital-dependent Parrondo games (without a bias parameter).  These are single-player games.  In game $A$, the player tosses a fair coin.  In game $B$, the player tosses a 1/10-coin if capital is divisible by 3 and and a 3/4-coin otherwise.  In either case, the player wins one unit with heads and loses one unit with tails.  The underlying Markov chain corresponding to game $B$ has state space $\Sigma_0:=\{0,1,2\}$ and one-step transition matrix
$$
\bm P_B:=\begin{pmatrix}0&1/10&9/10\\1/4&0&3/4\\3/4&1/4&0\end{pmatrix}.
$$
Its unique stationary distribution is $\bm\pi_B=(1/13)(5,2,6)$.  The payoff matrix has the form
$$
\bm W:=\begin{pmatrix}0&1&-1\\-1&0&1\\1&-1&0\end{pmatrix}.
$$
We find that
$$
\mu_B=\bm\pi_B\dot{\bm P}_B\bm1=0.
$$
The underlying Markov chain corresponding to game $A$ has the same state space and one-step transition matrix
$$
\bm P_A:=\begin{pmatrix}0&1/2&1/2\\1/2&0&1/2\\1/2&1/2&0\end{pmatrix}
$$
with unique stationary distribution $\bm\pi_A=(1/3)(1,1,1)$.  The payoff matrix is the same, and we find that
$$
\mu_A=\bm\pi_A\dot{\bm P}_A\bm1=0,
$$
a result that is obvious without calculation.
Finally, the underlying Markov chain corresponding to game $C:=\frac12 A+\frac12 B$ has the same state space  and one-step transition matrix
$$
\bm P_C:=\frac12(\bm P_A+\bm P_B)=\begin{pmatrix}0&3/10&7/10\\3/8&0&5/8\\5/8&3/8&0\end{pmatrix}
$$
with unique stationary distribution $\bm\pi_C=(1/709)(245,180,284)$.  The payoff matrix is the same, and we find that
$$
\mu_C=\bm\pi_C\dot{\bm P}_C\bm1=\frac{18}{709}\approx0.0253879.
$$
This is perhaps the best-known example of Parrondo's paradox, and the SLLN justifies the conclusion:  Two fair games combine to win.

We can also derive a CLT, which requires the fundamental matrix
$$
\bm Z_B:=(\bm I-(\bm P_B-\bm \Pi_B))^{-1}=\frac{1}{2197}\begin{pmatrix}1725&-38&510\\-95&1938&354\\425&118&1654\end{pmatrix}.
$$
We find that
$$
\sigma^2_B=\bm\pi_B\ddot{\bm P}_B\bm 1
-(\bm\pi_B\dot{\bm P}_B\bm 1)^2+2\bm\pi_B\dot{\bm P}_B(\bm Z_B-\bm\Pi_B)\dot{\bm P}_B\bm 1=\bigg(\frac{9}{13}\bigg)^2\approx0.479290.
$$
Similarly,
$$
\bm Z_A:=(\bm I-(\bm P_A-\bm \Pi_A))^{-1}=\frac{1}{9}\begin{pmatrix}7&1&1\\1&7&1\\1&1&7\end{pmatrix},
$$
hence
$$
\sigma^2_A=\bm\pi_A\ddot{\bm P}_A\bm 1
-(\bm\pi_A\dot{\bm P}_A\bm 1)^2+2\bm\pi_A\dot{\bm P}_A(\bm Z_A-\bm\Pi_A)\dot{\bm P}_A\bm 1=1,
$$
as is obvious without the formula.  Finally, 
$$
\bm Z_C:=(\bm I-(\bm P_C-\bm \Pi_C))^{-1}=\frac{1}{502681}\begin{pmatrix}392265&22884&87532\\23585&408580&70516\\80305&39900&382476\end{pmatrix},
$$
and we conclude that
$$
\sigma^2_C=\bm\pi_C\ddot{\bm P}_C\bm 1
-(\bm\pi_C\dot{\bm P}_C\bm 1)^2+2\bm\pi_C\dot{\bm P}_C(\bm Z_C-\bm\Pi_C)\dot{\bm P}_C\bm 1=\frac{311313105}{356400829}\approx0.873492.
$$
In each case we have a CLT.
\end{example}

Next we turn to another SLLN and CLT of Ethier and Lee \cite{EL09}, this one motivated by the case of periodic patterns.

Let $\bm P_A$ and $\bm P_B$ be one-step transition matrices for Markov chains in a finite state space $\Sigma_0$.  Fix integers $r,s\ge1$.  Assume that $\bm P:=\bm P_A^r\bm P_B^s$, as well as all cyclic permutations of $\bm P_A^r\bm P_B^s$, are ergodic, and let the row vector $\bm\pi$ be the unique stationary distribution of $\bm P$.  Let $\bm\Pi$ be the square matrix each of whose rows is equal to $\bm\pi$, and let ${\bm Z}:=({\bm I}-({\bm P}-{\bm \Pi}))^{-1}$ be the fundamental matrix of $\bm P$.  Given a real-valued function $w$ on $\Sigma_0\times\Sigma_0$,  define the payoff matrix $\bm W:=(w(i,j))_{i,j\in\Sigma_0}$.  Define $\dot{\bm P}_A:=\bm P_A\circ\bm W$, $\dot{\bm P}_B:=\bm P_B\circ\bm W$, $\ddot{\bm P}_A:=\bm P_A\circ\bm W\circ\bm W$, $\ddot{\bm P}_B:=\bm P_B\circ\bm W\circ\bm W$, where $\circ$ denotes the Hadamard (entrywise) product.  Let
\begin{equation*}
\mu_{[r,s]}:=\frac{1}{r+s}\bigg[\sum_{u=0}^{r-1} {\bm\pi}{\bm P}_A^u\dot{\bm P}_A\bm1+\sum_{v=0}^{s-1}
{\bm\pi}{\bm P}_A^r{\bm P}_B^v\dot{\bm P}_B\bm1\bigg],
\end{equation*}
and
\begin{align*}
\sigma_{[r,s]}^2
&=\frac{1}{r+s}\bigg[\sum_{u=0}^{r-1}[\bm \pi\bm P_A^u\ddot{\bm P}_A\bm1-(\bm\pi\bm P_A^u\dot{\bm P}_A\bm1)^2]\nonumber\\
&\qquad\quad+\sum_{v=0}^{s-1}[\bm\pi\bm P_A^r\bm P_B^v\ddot{\bm P}_B\bm1-(\bm\pi\bm P_A^r\bm P_B^v\dot{\bm P}_B\bm1)^2]\nonumber\\
&\qquad\quad{}+2\sum_{0\le u<v\le r-1}\bm\pi\bm P_A^u\dot{\bm P}_A(\bm P_A^{v-u-1}-\bm \Pi\bm P_A^v)\dot{\bm P}_A\bm1\nonumber\\
&\qquad\quad{}+2\sum_{u=0}^{r-1}\sum_{v=0}^{s-1}\bm\pi\bm P_A^u\dot{\bm P}_A(\bm P_A^{r-u-1}-\bm\Pi\bm P_A^r)\bm P_B^v\dot{\bm P}_B\bm1\nonumber\\
&\qquad\quad{}+2\sum_{0\le u<v\le s-1}\bm\pi\bm P_A^r\bm P_B^u\dot{\bm P}_B(\bm P_B^{v-u-1}-\bm\Pi\bm P_A^r\bm P_B^v)\dot{\bm P}_B\bm1\nonumber\\
&\qquad\quad{}+2\bigg(\sum_{u=0}^{r-1}\sum_{v=0}^{r-1}\bm\pi\bm P_A^u\dot{\bm P}_A\bm P_A^{r-u-1}\bm P_B^s(\bm Z-\bm \Pi)\bm P_A^v\dot{\bm P}_A\bm1\nonumber\\
&\qquad\qquad\quad{}+\sum_{u=0}^{r-1}\sum_{v=0}^{s-1}\bm\pi\bm P_A^u\dot{\bm P}_A\bm P_A^{r-u-1}\bm P_B^s(\bm Z-\bm\Pi)\bm P_A^r\bm P_B^v\dot{\bm P}_B\bm1\nonumber\\
&\qquad\qquad\quad{}+\sum_{u=0}^{s-1}\sum_{v=0}^{r-1}\bm\pi\bm P_A^r\bm P_B^u\dot{\bm P}_B\bm P_B^{s-u-1}(\bm Z-\bm\Pi)\bm P_A^v\dot{\bm P}_A\bm1\nonumber\\
&\qquad\qquad\quad{}+\sum_{u=0}^{s-1}\sum_{v=0}^{s-1}\bm\pi\bm P_A^r\bm P_B^u\dot{\bm P}_B\bm P_B^{s-u-1}(\bm Z-\bm\Pi)\bm P_A^r\bm P_B^v\dot{\bm P}_B\bm1\bigg)\bigg],
\end{align*}
where $\bm1$ denotes a column vector of $1$s with entries indexed by $\Sigma_0$.  Let $\{X_n\}_{n\ge0}$ be a nonhomogeneous Markov chain in $\Sigma_0$ with one-step transition matrices $\bm P_A,\ldots,\bm P_A$ $(r\text{ times})$, $\bm P_B,\ldots,\bm P_B$ $(s\text{ times})$, $\bm P_A,\ldots,\bm P_A$ $(r\text{ times})$, $\bm P_B,\ldots,\bm P_B$ $(s\text{ times})$, and so on.  For each $n\ge1$, define $\xi_n:=w(X_{n-1},X_n)$ and $S_n:=\xi_1+\cdots+\xi_n$.  

\begin{theorem}[Ethier and Lee \cite{EL09}]\label{SLLN2}
Under the above assumptions, and with the distribution of $X_0$ arbitrary, 
$$
\frac{S_n}{n}\to \mu_{[r,s]}\;\;{\rm a.s.}
$$
and, if $\sigma_{[r,s]}^2>0$, then 
$$
\frac{S_n-n\mu_{[r,s]}}{\sqrt{n\sigma_{[r,s]}^2}}\to_d N(0,1) \text{ as } n\to\infty.
$$
\end{theorem}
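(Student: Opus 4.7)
The plan is to reduce the nonhomogeneous setup to a homogeneous one so that Theorem~\ref{SLLN} can be applied. Let $T:=r+s$ and introduce the embedded chain $\tilde X_k:=X_{kT}$, a time-homogeneous ergodic Markov chain on $\Sigma_0$ with transition matrix $\bm P=\bm P_A^r\bm P_B^s$ and stationary distribution $\bm\pi$. Aggregate the original increments into block sums
\[
Y_k:=\xi_{(k-1)T+1}+\cdots+\xi_{kT},\qquad S_{KT}=\sum_{k=1}^K Y_k.
\]
Since $Y_k$ depends on the past only through $\tilde X_{k-1}=X_{(k-1)T}$, conditioning at each intermediate step yields
\[
g(i):=\E[Y_1\mid X_0=i]=\sum_{u=0}^{r-1}(\bm P_A^u\dot{\bm P}_A\bm 1)_i+\sum_{v=0}^{s-1}(\bm P_A^r\bm P_B^v\dot{\bm P}_B\bm 1)_i,
\]
and $\bm\pi g/T=\mu_{[r,s]}$, identifying the asymptotic mean.

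I would prove the SLLN by the decomposition $Y_k=g(\tilde X_{k-1})+R_k$, where $\{R_k\}$ is a bounded martingale-difference sequence relative to $\mathcal{F}_{kT}:=\sigma(X_j:j\le kT)$. Theorem~\ref{SLLN} applied to $\tilde X$ with payoff $\tilde w(i,j):=g(i)$ gives $K^{-1}\sum_{k=1}^K g(\tilde X_{k-1})\to\bm\pi g$ a.s., the martingale SLLN gives $K^{-1}\sum_{k=1}^K R_k\to 0$ a.s., and $|S_n-S_{\lfloor n/T\rfloor T}|\le T\|w\|_\infty$ transfers the conclusion to $S_n/n\to\mu_{[r,s]}$. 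For the CLT I would solve the Poisson equation $(\bm I-\bm P)h=g-\bm\pi g$ by $h:=(\bm Z-\bm\Pi)g$ to replace $g(\tilde X_{k-1})-\bm\pi g$ by a martingale difference plus a telescoping tail, giving $S_{KT}-KT\mu_{[r,s]}=M_K+O(1)$ for a single $\mathcal{F}_{kT}$-martingale $M_K$ (with the $R_k$ folded in, since the two martingales share the same filtration and are mutually orthogonal). The Lindeberg martingale CLT then yields a CLT for $S_{KT}$ with asymptotic variance
\[
\tilde\sigma^2=\frac{1}{T}\bigg[\Var_{\bm\pi}(Y_1)+2\sum_{\ell=1}^{\infty}\mathrm{Cov}_{\bm\pi}(Y_1,Y_{1+\ell})\bigg],
\]
and the residual bound again transfers the CLT from $S_{KT}$ to $S_n$.

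The hard part is the algebraic verification that $\tilde\sigma^2=\sigma^2_{[r,s]}$. Under $\bm\pi$-stationarity one has $\E_{\bm\pi}[\xi_n]=\bm\pi\bm P_A^{n-1}\dot{\bm P}_A\bm 1$ for $1\le n\le r$ and $\E_{\bm\pi}[\xi_n]=\bm\pi\bm P_A^r\bm P_B^{n-r-1}\dot{\bm P}_B\bm 1$ for $r<n\le T$, with analogous formulas for $\E_{\bm\pi}[\xi_n^2]$ involving $\ddot{\bm P}_A$ or $\ddot{\bm P}_B$, and for the pairwise products of $\xi_m,\xi_n$ involving factors $\bm P_A^{v-u-1}$, $\bm P_A^{r-u-1}\bm P_B^v$, or $\bm P_B^{v-u-1}$. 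Expanding $\Var_{\bm\pi}(Y_1)=\sum_{n=1}^{T}\Var_{\bm\pi}(\xi_n)+2\sum_{1\le m<n\le T}\mathrm{Cov}_{\bm\pi}(\xi_m,\xi_n)$ and splitting the double sum by whether $m,n$ lie in the $A$-block, the $B$-block, or straddle them yields the first five sums of $\sigma^2_{[r,s]}$. For the cross-block series, the fundamental-matrix identity $\sum_{k\ge 0}(\bm P^k-\bm\Pi)=\bm Z-\bm\Pi$ (consequence of $(\bm P-\bm\Pi)^k=\bm P^k-\bm\Pi$ for $k\ge 1$) collapses $\sum_{\ell\ge 1}\mathrm{Cov}_{\bm\pi}(Y_1,Y_{1+\ell})=\E_{\bm\pi}[Y_1\,((\bm Z-\bm\Pi)g)(X_T)]$; expanding both $Y_1$ and $g$ into their $A$-block and $B$-block pieces then produces exactly the four double sums sandwiched by $\bm Z-\bm\Pi$ in $\sigma^2_{[r,s]}$. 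The bookkeeping — matching each $\bm P_A^u$, $\bm P_A^r\bm P_B^v$, $\bm P_A^{r-u-1}$, and $\bm P_B^{s-u-1}$ to its correct position — is tedious but purely mechanical.
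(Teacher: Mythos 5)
The paper gives no proof of Theorem~\ref{SLLN2}; it is imported verbatim from Ethier and Lee \cite{EL09}, so there is no internal argument to compare yours against. Your proposal is the standard proof of such results and, as far as I can check, it is correct: the embedded chain $\tilde X_k=X_{kT}$ with $T=r+s$ is homogeneous with transition matrix $\bm P_A^r\bm P_B^s$ and stationary distribution $\bm\pi$; the block sums $Y_k$ have conditional mean $g(\tilde X_{k-1})$ with $\bm\pi g/T=\mu_{[r,s]}$; the martingale SLLN plus the Poisson-equation decomposition $h=(\bm Z-\bm\Pi)g$ (which does solve $(\bm I-\bm P)h=g-(\bm\pi g)\bm 1$, since $(\bm I-\bm P)\bm Z=\bm I-\bm\Pi$ and $(\bm I-\bm P)\bm\Pi=\bm 0$) yield the SLLN and CLT along $n=KT$, and the uniformly bounded residual transfers both to general $n$. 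The variance bookkeeping also checks out structurally: $\Var_{\bm\pi}(Y_1)$ produces the five within-period groups of terms (the $\bm\Pi$ insertions accounting for the subtracted products of means), and $\sum_{\ell\ge1}\mathrm{Cov}_{\bm\pi}(Y_1,Y_{1+\ell})$ collapses via $\bm Z-\bm\Pi=\sum_{k\ge0}(\bm P^k-\bm\Pi)$ into the four double sums sandwiched by $\bm Z-\bm\Pi$. One small inaccuracy: the two martingale-difference arrays ($R_k$ and the Poisson-equation differences $h(\tilde X_k)-(\bm Ph)(\tilde X_{k-1})$) are adapted to the same filtration but need not be mutually orthogonal, and their cross-covariance genuinely contributes to the limit variance; this does no harm in your argument only because you identify $\tilde\sigma^2$ directly as the long-run variance of the stationary block sums rather than as a sum of two separate martingale variances, so the orthogonality claim should simply be deleted. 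You also use only ergodicity of $\bm P_A^r\bm P_B^s$ itself, not of its cyclic permutations, which is a slightly weaker hypothesis than the theorem states and is therefore unobjectionable.
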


\begin{example}
To illustrate this result, we consider the capital-dependent Parrondo games as above, and we take $r=s=2$.  Then
$$
\bm P=\bm P_A^2\bm P_B^2=\frac{1}{320}\begin{pmatrix}162&59&99\\151&58&111\\111&47&162\end{pmatrix}.
$$
Its unique stationary distribution is $\bm\pi=(1/6357)(2783, 1075, 2499)$, and the fundamental matrix is
$$
\bm Z=\frac{1}{525348837}\begin{pmatrix}569627023& 10027235& -54305421\\ 22416463& 532826915& -29894541\\ -58953137& -14383645& 598685619\end{pmatrix}.
$$
In this example, $\dot{\bm P}_A\bm1=\bm0$, $\ddot{\bm P}_A=\bm P_A$, and $\ddot{\bm P}_B=\bm P_B$, and this simplifies the mean and variance formulas considerably.  Specifically, we have
$$
\mu_{[2,2]}=\frac{1}{4}\bm\pi\bm P_A^2(\bm I+\bm P_B)\dot{\bm P}_B\bm1
$$
and 
\begin{align*}
\sigma_{[2,2]}^2&=\frac{1}{4}\big[2+2-(\bm\pi\bm P_A^2\dot{\bm P}_B\bm1)^2-(\bm\pi\bm P_A^2\bm P_B\dot{\bm P}_B\bm1)^2\\
&\quad+2\bm\pi\dot{\bm P}_A(\bm P_A-\bm\Pi\bm P_A^2)(\bm I+\bm P_B)\dot{\bm P}_B\bm1\\
&\quad+2\bm\pi\bm P_A\dot{\bm P}_A(\bm I-\bm\Pi\bm P_A^2)(\bm I+\bm P_B)\dot{\bm P}_B\bm1\\
&\quad+2\bm\pi\bm P_A^2\dot{\bm P}_B(\bm I-\bm\Pi\bm P_A^2\bm P_B)\dot{\bm P}_B\bm1\\
&\quad+2\bm\pi\dot{\bm P}_A\bm P_A\bm P_B^2(\bm Z-\bm\Pi)\bm P_A^2(\bm I+\bm P_B)\dot{\bm P}_B\bm1\\
&\quad+2\bm\pi\bm P_A\dot{\bm P}_A\bm P_B^2(\bm Z-\bm\Pi)\bm P_A^2(\bm I+\bm P_B)\dot{\bm P}_B\bm1\\
&\quad+2\bm\pi\bm P_A^2\dot{\bm P}_B\bm P_B(\bm Z-\bm\Pi)\bm P_A^2(\bm I+\bm P_B)\dot{\bm P}_B\bm1\\
&\quad+2\bm\pi\bm P_A^2\bm P_B\dot{\bm P}_B(\bm Z-\bm\Pi)\bm P_A^2(\bm I+\bm P_B)\dot{\bm P}_B\bm1\big].
\end{align*}
We conclude that
$$
\mu_{[2,2]}=\frac{4}{163}\approx0.0245399\quad\text{and}\quad\sigma_{[2,2]}^2=\frac{1923037543}{2195688729}\approx0.875824.
$$
These numbers are consistent with Ethier and Lee \cite{EL09}.
\end{example}

\subsection{Application to game $B$}\label{SLLN-gameB}

The Markov chain formalized by Mihailovi\'c and Rajkovi\'c \cite{MR03} keeps track of the status (loser or winner, 0 or 1) of each of the $N\ge3$ players of game $B$. Its state space is the product space 
$$
\{\eta=(\eta(1),\eta(2),\ldots,\eta(N)): \eta(x)\in\{0,1\}{\rm\ for\ }x=1,\ldots,N\}=\{0,1\}^N
$$
with $2^N$ states.  Let $m_x(\eta):=2\eta(x-1)+\eta(x+1)\in\{0,1,2,3\}$.  Of course $\eta(0):=\eta(N)$ and $\eta(N+1):=\eta(1)$ because of the circular arrangement of players.  Also, let $\eta_x$ be the element of $\{0,1\}^N$ equal to $\eta$ except at the $x$th coordinate.  For example, $\eta_1:=(1-\eta(1),\eta(2),\eta(3),\ldots,\eta(N))$.

The one-step transition matrix $\bm P_B$ for this Markov chain depends not only on $N$ but on four parameters, $p_0,p_1,p_2,p_3\in[0,1]$.  It has the form
\begin{equation}\label{PB1}
P_B(\eta,\eta_x):=\begin{cases}N^{-1}p_{m_x(\eta)}&\text{if $\eta(x)=0$,}\\N^{-1}q_{m_x(\eta)}&\text{if $\eta(x)=1$,}\end{cases}\qquad x=1,\ldots,N,\;\eta\in\{0,1\}^N,
\end{equation}
and 
\begin{equation}\label{PB2}
P_B(\eta,\eta):=N^{-1}\bigg(\sum_{x:\eta(x)=0}q_{m_x(\eta)}+\sum_{x:\eta(x)=1}p_{m_x(\eta)}\bigg),\qquad \eta\in\{0,1\}^N,
\end{equation}
where $q_m:=1-p_m$ for $m=0,1,2,3$ and empty sums are 0.  The Markov chain is irreducible and aperiodic if $0<p_m<1$ for $m=0,1,2,3$.  Under slightly weaker assumptions (see Ethier and Lee \cite{EL13a}), the Markov chain is ergodic, which suffices.  For example, if $p_0$ is arbitrary and $0<p_m<1$ for $m=1,2,3$, or if $0<p_m<1$ for $m=0,1,2$ and $p_3$ is arbitrary, then ergodicity holds.

It appears at first glance that the theorem does not apply in the context of game $B$ because the payoffs are not completely specified by the one-step transitions of the Markov chain.  Specifically, a transition from  a state $\eta$ to itself results whenever a loser loses or a winner wins, so the transition does not determine the payoff.  

Our original Markov chain has state space $\{0,1\}^N$ and its one-step transition matrix $\bm P_B$ is given by \eqref{PB1} and \eqref{PB2}.  Assuming it is ergodic, let $\bm\pi_B$ denote its unique stationary distribution.  The approach in Ethier and Lee \cite{EL12a} augments the state space, letting $\Sigma^*:=\{0,1\}^N\times \{1,2,\ldots,N\}$ and keeping track not only of the status of each player as described by $\eta\in\{0,1\}^N$ but also of the label of the next player to play, say $x$.  The new one-step transition matrix $\bm P_B^*$ can be determined, as can its unique stationary distribution $\bm\pi_B^*$, and the theorem applies.

However, there is a drawback to this approach, namely that it is not clear that the variance parameter $(\sigma^*)^2$ is the same as the original one, $\sigma^2$.  (It is easy to verify that $\mu^*=\mu$.)  Therefore, we take a different approach, namely the one used by Ethier and Lee \cite{EL17} in their study of two-dimensional spatial models.

Here a different augmentation of $\{0,1\}^N$ is more effective.  We let $\Sigma^\circ:=\{0,1\}^N\times \{-1,1\}$ and keep track not only of $\eta\in\{0,1\}^N$ but also of the profit from the last game played, say $s\in\{-1,1\}$.  The new one-step transition matrix $\bm P_B^\circ$ has the form, for every $(\eta,s)\in\Sigma^\circ$,
\begin{equation*}
P_B^\circ((\eta,s),(\eta_x,1)):=\begin{cases}N^{-1}p_{m_x(\eta)}&\text{if $\eta(x)=0$,}\\
0&\text{if $\eta(x)=1$,}\end{cases}
\end{equation*}
\begin{equation*}
P_B^\circ((\eta,s),(\eta_x,-1)):=\begin{cases}0&\text{if $\eta(x)=0$,}\\
N^{-1}q_{m_x(\eta)}&\text{if $\eta(x)=1$,}\end{cases}
\end{equation*}
for $x=1,\ldots,N$, and
\begin{equation*}
P_B^\circ((\eta,s),(\eta,1)):=N^{-1}\sum_{x:\eta(x)=1}p_{m_x(\eta)},
\end{equation*}
\begin{equation*}
P_B^\circ((\eta,s),(\eta,-1)):=N^{-1}\sum_{x:\eta(x)=0}q_{m_x(\eta)},
\end{equation*}
where $q_m:=1-p_m$ for $m=0,1,2,3,4$ and $m_x(\eta)=2\eta(x-1)+\eta(x+1)$.  There are two inaccessible states, $(\bm 0,1)$ and $(\bm 1,-1)$, but the Markov chain remains ergodic.  Let $\bm\pi_B^\circ$ denote the unique stationary distribution, which has entry 0 at each of the two inaccessible states.  The payoff function $w^\circ$ can now be defined by
\begin{equation*}
w^\circ((\eta,s),(\eta_x,t))=t\text{ if $\eta(x)=(1-t)/2$,}\qquad w^\circ((\eta,s),(\eta,t))=t
\end{equation*}
for all $(\eta,s)\in\Sigma^\circ$, $x=1,2,\ldots,N$, and $t\in\{-1,1\}$, and $w^\circ=0$ otherwise.  This allows us to define the matrix $\bm W^\circ$ and then $\dot{\bm P}_B^\circ:={\bm P}_B^\circ\circ\bm W^\circ$ and $\ddot{\bm P}_B^\circ:={\bm P}_B^\circ\circ\bm W^\circ\circ\bm W^\circ$, the Hadamard (or entrywise) products.  Theorem~\ref{SLLN} yields the following.

Let $0<p_m<1$ for $m=0,1,2$ or for $m=1,2,3$, so that the Markov chain with one-step transition matrix $\bm P_B^\circ$ is ergodic, and let the row vector $\bm\pi_B^\circ$ be its unique stationary distribution.  Define
\begin{equation*}
\mu_B^\circ=\bm\pi_B^\circ\dot{\bm P}_B^\circ\bm1,\qquad (\sigma_B^\circ)^2=\bm\pi_B^\circ\ddot{\bm P}_B^\circ\bm 1-(\bm\pi_B^\circ\dot{\bm P}_B^\circ\bm 1)^2+2\bm\pi_B^\circ\dot{\bm P}_B^\circ(\bm Z_B^\circ-\bm1\bm\pi_B^\circ)\dot{\bm P}_B^\circ\bm 1.
\end{equation*}
where $\bm1$ denotes a column vector of $1$s with entries indexed by $\Sigma_B^\circ$ and $\bm Z_B^\circ:=(\bm I-(\bm P_B^\circ-\bm1\bm\pi_B^\circ))^{-1}$ is the fundamental matrix.  (Notice that $\bm1\bm\pi_B^\circ$ is the square matrix each of whose rows is equal to $\bm\pi_B^\circ$.)  Let $\{X_n^\circ\}_{n\ge0}$ be a time-homogeneous Markov chain in $\Sigma^\circ$ with one-step transition matrix $\bm P_B^\circ$, and let the initial distribution be arbitrary.  For each $n\ge1$, define $\xi_n:=w^\circ(X_{n-1}^\circ,X_n^\circ)$ and $S_n:=\xi_1+\cdots+\xi_n$.

\begin{theorem}\label{SLLN/CLT-PBcirc}
Under the above assumptions, and with the initial distribution arbitrary,
$$
\lim_{n\to\infty}\frac{S_n}{n}=\mu_B^\circ\;\;\emph{a.s.} 
$$
and, if $(\sigma_B^\circ)^2>0$, then 
$$
\frac{S_n-n\mu_B^\circ}{\sqrt{n(\sigma_B^\circ)^2}}\to_d N(0,1)\text{ as }n\to\infty.
$$
\end{theorem}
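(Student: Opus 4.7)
My plan is to invoke Theorem~\ref{SLLN} directly, applied to the augmented chain $\{X_n^\circ\}$ on the finite state space $\Sigma^\circ$. Three items need to be checked: (i) that $\bm P_B^\circ$ is ergodic after the two inaccessible states $(\bm 0,1)$ and $(\bm 1,-1)$ are removed; (ii) that $w^\circ(X_{n-1}^\circ,X_n^\circ)$ coincides with the actual profit on turn $n$; and (iii) that the conclusions are insensitive to the starting state, since Theorem~\ref{SLLN} allows an arbitrary initial distribution but $\bm P_B^\circ$ is reducible on the full $\Sigma^\circ$.

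For (i), I would first verify that $(\bm 0,1)$ and $(\bm 1,-1)$ really are inaccessible: a transition into a state of the form $(\cdot,1)$ is either a flip $\eta\to\eta_x$ with $\eta(x)=0$, which forces the target first coordinate to carry a $1$ at position $x$, or a self-loop whose probability $N^{-1}\sum_{x:\eta(x)=1}p_{m_x(\eta)}$ vanishes at $\eta=\bm 0$; a symmetric argument disposes of $(\bm 1,-1)$. On the accessible complement I would exploit the ergodicity of $\bm P_B$ recalled in Section~\ref{SLLN-gameB}: for any two accessible states $(\eta,s)$ and $(\eta',s')$, the accessibility of the target produces a neighbor $\tilde\eta$ of $\eta'$ (differing in a single coordinate whose parity is chosen by $s'$, which is possible because $s'=1$ forces $\eta'\neq\bm 0$ and $s'=-1$ forces $\eta'\neq\bm 1$), so a $\bm P_B$-path from $\eta$ to $\tilde\eta$ concatenated with the sign-preserving flip $\tilde\eta\to\eta'$ lifts to a path in $\bm P_B^\circ$ from $(\eta,s)$ to $(\eta',s')$. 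Aperiodicity is then automatic from the positive self-loop probabilities $P_B^\circ((\eta,s),(\eta,\pm 1))$ available at states with $\eta\notin\{\bm 0,\bm 1\}$.

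For (ii), the payoff identification is a case analysis on the four possible outcomes of a single turn: a loser winning, a loser losing, a winner losing, and a winner winning correspond respectively to the transitions $(\eta,s)\to(\eta_x,1)$, $(\eta,s)\to(\eta,-1)$, $(\eta,s)\to(\eta_x,-1)$, and $(\eta,s)\to(\eta,1)$, and in each case the definition of $w^\circ$ returns the correct $\pm 1$ profit. For (iii), starting in one of the two inaccessible states only delays entry into the ergodic class by a single step, after which the chain lies in the accessible class forever; since Theorem~\ref{SLLN} permits arbitrary initial distributions, both the SLLN with mean $\mu_B^\circ$ and, when $(\sigma_B^\circ)^2>0$, the CLT with variance $(\sigma_B^\circ)^2$ then follow.

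The delicate point, flagged in the discussion preceding the theorem and the main reason for selecting this augmentation over the one by next-player label used in~\cite{EL12a}, is to be sure that the variance formula produced by Theorem~\ref{SLLN} really is the asymptotic variance of the original profit sequence $S_n$. This is exactly what the choice of second coordinate $s\in\{-1,1\}$ accomplishes: the profit $\xi_n$ is a \emph{deterministic} function of $(X_{n-1}^\circ,X_n^\circ)$, so no conditioning step is needed and both conclusions of Theorem~\ref{SLLN} transfer verbatim, completing the proof.
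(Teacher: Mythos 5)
Your proposal is correct and follows essentially the same route as the paper, which derives Theorem~\ref{SLLN/CLT-PBcirc} simply by applying Theorem~\ref{SLLN} to the augmented chain $\{X_n^\circ\}$ on $\Sigma^\circ$ with payoff matrix $\bm W^\circ$, noting that the two inaccessible states $(\bm 0,1)$ and $(\bm 1,-1)$ do not disturb ergodicity. Your explicit verifications of inaccessibility, of the lifting of ergodicity from $\bm P_B$ to $\bm P_B^\circ$, and of the fact that the second coordinate makes $\xi_n$ a deterministic function of $(X_{n-1}^\circ,X_n^\circ)$ merely fill in details the paper leaves implicit.
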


We next show that there is a simpler expression for this mean and variance.  Let us define
\begin{equation*}
\mu_B:=\bm\pi_B\dot{\bm P}_B\bm1,\qquad\sigma_B^2:=\bm\pi_B\ddot{\bm P}_B\bm 1-(\bm\pi_B\dot{\bm P}_B\bm 1)^2+2\bm\pi_B\dot{\bm P}_B(\bm Z_B-\bm1\bm\pi_B)\dot{\bm P}_B\bm 1,\\
\end{equation*}
where $\bm1$ is the column vector of 1s of the appropriate dimension, $\dot{\bm P}_B$ is $\bm P_B$ with each $q_m$ replaced by $-q_m$, and $\ddot{\bm P}_B=\bm P_B$.  This ``rule of thumb'' for $\dot{\bm P}_B$ requires some caution:  It must be applied before any simplifications to $\bm P_B$ are made using $q_m=1-p_m$.  Of course, $\bm\pi_B$ is the unique stationary distribution, and $\bm Z_B$ is the fundamental matrix, of $\bm P_B$.

\begin{theorem}\label{means,variances,B}
\begin{equation*}
\mu_B^\circ=\mu_B
\end{equation*}
and
\begin{equation*}
(\sigma_B^\circ)^2=\sigma_B^2.
\end{equation*}
\end{theorem}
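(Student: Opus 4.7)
The approach is a direct matrix-algebraic argument based on an intertwining between the augmented chain $\bm P_B^\circ$ on $\Sigma^\circ=\{0,1\}^N\times\{-1,1\}$ and the original chain $\bm P_B$ on $\{0,1\}^N$. Define the $|\Sigma^\circ|\times 2^N$ ``lift'' matrix $\bm L$ by $L((\eta,s),\eta'):=\delta_{\eta,\eta'}$. Since $P_B^\circ((\eta,s),\cdot)$ is independent of $s$ and $\sum_t P_B^\circ((\eta,s),(\eta',t))=P_B(\eta,\eta')$ by construction, we have
\[
\bm P_B^\circ\bm L=\bm L\bm P_B,\qquad \bm L\bm 1=\bm 1,\qquad \bm\pi_B^\circ\bm L=\bm\pi_B,
\]
the last by uniqueness of the stationary distribution of $\bm P_B$ (the row vector $\bm\pi_B^\circ\bm L$ is a probability vector satisfying $\bm\pi_B^\circ\bm L\bm P_B=\bm\pi_B^\circ\bm P_B^\circ\bm L=\bm\pi_B^\circ\bm L$).

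The crucial step is the analogous intertwining for the signed matrix,
\[
\dot{\bm P}_B^\circ\bm L=\bm L\dot{\bm P}_B,
\]
which I plan to verify column by column. For an off-diagonal column indexed by $\eta_x$, only one of the two flip entries of $\bm P_B^\circ$ is nonzero, and multiplication by the $t$-sign from $w^\circ$ produces $+N^{-1}p_{m_x(\eta)}$ when $\eta(x)=0$ and $-N^{-1}q_{m_x(\eta)}$ when $\eta(x)=1$, exactly matching $\dot P_B(\eta,\eta_x)$ under the rule $q_m\mapsto -q_m$. For the diagonal column indexed by $\eta$, the two self-loop entries combine to $N^{-1}\sum_{x:\eta(x)=1}p_{m_x(\eta)}-N^{-1}\sum_{x:\eta(x)=0}q_{m_x(\eta)}=\dot P_B(\eta,\eta)$. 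The caveat in the theorem statement about not pre-simplifying $q_m=1-p_m$ is essential here, since the sign flip acts on $q_m$ as a symbol.

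With the two intertwinings in hand, the remainder is formal. For the mean,
\[
\mu_B^\circ=\bm\pi_B^\circ\dot{\bm P}_B^\circ\bm 1=\bm\pi_B^\circ\dot{\bm P}_B^\circ\bm L\bm 1=\bm\pi_B^\circ\bm L\dot{\bm P}_B\bm 1=\bm\pi_B\dot{\bm P}_B\bm 1=\mu_B.
\]
For the variance, iterating $\bm P_B^\circ\bm L=\bm L\bm P_B$ yields $(\bm P_B^\circ)^k\bm L=\bm L\bm P_B^k$ for all $k\ge 0$; combined with $\bm 1\bm\pi_B^\circ\bm L=\bm L(\bm 1\bm\pi_B)$ (both sides have every row equal to $\bm\pi_B$) and the Neumann expansion $\bm Z-\bm\Pi=\sum_{k\ge 0}(\bm P^k-\bm\Pi)$, this gives $(\bm Z_B^\circ-\bm 1\bm\pi_B^\circ)\bm L=\bm L(\bm Z_B-\bm 1\bm\pi_B)$. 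Therefore
\[
\bm\pi_B^\circ\dot{\bm P}_B^\circ(\bm Z_B^\circ-\bm 1\bm\pi_B^\circ)\dot{\bm P}_B^\circ\bm 1=\bm\pi_B\dot{\bm P}_B(\bm Z_B-\bm 1\bm\pi_B)\dot{\bm P}_B\bm 1.
\]
Finally, $(w^\circ)^2=1$ wherever $P_B^\circ>0$ gives $\ddot{\bm P}_B^\circ=\bm P_B^\circ$ and thus $\bm\pi_B^\circ\ddot{\bm P}_B^\circ\bm 1=1=\bm\pi_B\ddot{\bm P}_B\bm 1$ (by the rule-of-thumb convention $\ddot{\bm P}_B=\bm P_B$). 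Combining the three terms yields $(\sigma_B^\circ)^2=\sigma_B^2$. The main obstacle is the case-by-case verification of the signed intertwining in the second paragraph; once it is established, the rest reduces to elementary linear algebra with $\bm L$ and the Neumann series.
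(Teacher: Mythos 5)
Your proof is correct. Note that the paper itself gives no argument for this theorem beyond the remark ``the proof is as in Ethier and Lee \cite{EL17},'' so there is no in-text proof to compare against; your intertwining argument supplies exactly the missing details, and it is the natural (and, as far as the cited literature goes, standard) way to do it. All the key steps check out: the identity $\bm P_B^\circ\bm L=\bm L\bm P_B$ is precisely the lumpability of $\bm P_B^\circ$ over the fibers $\{(\eta,1),(\eta,-1)\}$, which holds because the transition probabilities out of $(\eta,s)$ do not depend on $s$; the signed intertwining $\dot{\bm P}_B^\circ\bm L=\bm L\dot{\bm P}_B$ is where the ``rule of thumb'' ($q_m\mapsto -q_m$ before simplification) is actually justified, and your case check (off-diagonal columns $\eta_x$ with $\eta(x)=0$ versus $\eta(x)=1$, and the diagonal column combining the two self-loops with opposite signs) is complete; $\ddot{\bm P}_B^\circ=\bm P_B^\circ$ holds because $w^\circ=\pm1$ on the support of $\bm P_B^\circ$, matching the convention $\ddot{\bm P}_B=\bm P_B$; and the transfer of the fundamental-matrix term via $(\bm Z_B^\circ-\bm1\bm\pi_B^\circ)\bm L=\bm L(\bm Z_B-\bm1\bm\pi_B)$ is legitimate since ergodicity (which survives the two inaccessible states) guarantees geometric convergence of the series $\sum_{k\ge0}((\bm P_B^\circ)^k-\bm1\bm\pi_B^\circ)$.
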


\begin{remark}
The proof is as in Ethier and Lee \cite{EL17}.  Let us explain its significance.  $\mu_B^\circ$ and $(\sigma_B^\circ)^2$ are the mean and variance that appear in the SLLN and the CLT.  They are defined in terms of $\bm P_B^\circ$, the augmented one-step transition matrix.  $\mu_B$ and $\sigma_B^2$ are defined analogously in terms of $\bm P_B$, the original one-step transition matrix, using the rule of thumb.
\end{remark}

\subsection{Application to game $C':=\gamma A'+(1-\gamma)B$}\label{SLLN-gameC'}

This case is not much different from the previous one.  Notice that, if game $A'$ is played, the profit to the set of $N$ players is 0, since game $A'$ simply redistributes capital among the players.  So we can use the same augmentation of the state space as before, except that 0 is now a possible value of the profit from the last game played.  In other words, $\Sigma^\circ:=\{0,1\}^N\times\{-1,0,1\}$.  The transition probabilities require some new notation.  Let $\eta^{x,x\pm1,\pm1}$ be the element of $\{0,1\}^N$ representing the players' status after player $x$ plays player $x\pm1$ and wins (1) or loses ($-1$).  Of course player 0 is player $N$ and player $N+1$ is player 1.  E.g., $\eta^{1,2,-1}=(0,1,\eta(3),\ldots,\eta(N))$ (player 1 competes against player 2 and loses, leaving player 1 a loser and player 2 a winner, regardless of their previous status).  Then
\begin{align}\label{PC'circ1}
P_{C'}^\circ((\eta,s),(\eta_x,1))&=\begin{cases}(1-\gamma)N^{-1}p_{m_x(\eta)}&\text{if $\eta(x)=0$,}\\0&\text{if $\eta(x)=1$,}\end{cases}\\
P_{C'}^\circ((\eta,s),(\eta_x,-1))&=\begin{cases}0&\text{if $\eta(x)=0$,}\\(1-\gamma)N^{-1}q_{m_x(\eta)}&\text{if $\eta(x)=1$,}\end{cases}\\
P_{C'}^\circ((\eta,s),(\eta^{x,x-1,-1},0))&=\gamma(4N)^{-1},\\
P_{C'}^\circ((\eta,s),(\eta^{x,x-1,1},0))&=\gamma(4N)^{-1},\\
P_{C'}^\circ((\eta,s),(\eta^{x,x+1,-1},0))&=\gamma(4N)^{-1},\\
P_{C'}^\circ((\eta,s),(\eta^{x,x+1,1},0))&=\gamma(4N)^{-1},
\end{align}
for $x=1,2,\ldots,N$, and
\begin{align}
P_{C'}^\circ((\eta,s),(\eta,1))&=(1-\gamma)N^{-1}\sum_{x:\eta(x)=1}p_{m_x(\eta)},\\  \label{PC'circ8}
P_{C'}^\circ((\eta,s),(\eta,-1))&=(1-\gamma)N^{-1}\sum_{x:\eta(x)=0}q_{m_x(\eta)}.
\end{align}

Of course, we could also define $\bm P_{C'}=\gamma\bm P_{A'}+(1-\gamma)\bm P_B$.  We notice that Theorems~\ref{SLLN/CLT-PBcirc} and \ref{means,variances,B} hold in this framework without change. 

Let $0<p_m<1$ for $m=0,1,2$ or for $m=1,2,3$, so that the Markov chain with one-step transition matrix $\bm P_{C'}^\circ:=\gamma \bm P_{A'}^\circ+(1-\gamma)\bm P_B^\circ$ is ergodic, and let the row vector $\bm\pi_{C'}^\circ$ be its unique stationary distribution.  Define
\begin{align*}
\mu_{(\gamma,1-\gamma)'}^\circ&=\bm\pi_{C'}^\circ\dot{\bm P}_{C'}^\circ\bm1,\\
(\sigma_{(\gamma,1-\gamma)'}^\circ)^2&=\bm\pi_{C'}^\circ\ddot{\bm P}_{C'}^\circ\bm 1-(\bm\pi_{C'}^\circ\dot{\bm P}_{C'}^\circ\bm 1)^2+2\bm\pi_{C'}^\circ\dot{\bm P}_{C'}^\circ(\bm Z_{C'}^\circ-\bm1\bm\pi_{C'}^\circ)\dot{\bm P}_{C'}^\circ\bm 1.
\end{align*}
where $\bm1$ denotes a column vector of $1$s with entries indexed by $\Sigma^\circ$ and $\bm Z_{C'}^\circ:=(\bm I-(\bm P_{C'}^\circ-\bm1\bm\pi_{C'}^\circ))^{-1}$ is the fundamental matrix.  (Notice that $\bm1\bm\pi_{C'}^\circ$ is the square matrix each of whose rows is equal to $\bm\pi_{C'}^\circ$.)  Let $\{X_n^\circ\}_{n\ge0}$ be a time-homogeneous Markov chain in $\Sigma^\circ$ with one-step transition matrix $\bm P_{C'}^\circ$.  For each $n\ge1$, define $\xi_n:=w^\circ(X_{n-1}^\circ,X_n^\circ)$ and $S_n:=\xi_1+\cdots+\xi_n$.  

\begin{theorem}\label{SLLN/CLT-PC'circ}
Under the above assumptions, and with the distribution of $X_0$ arbitrary, 
$$
\lim_{n\to\infty}\frac{S_n}{n}=\mu_{(\gamma,1-\gamma)'}^\circ\;\;\emph{a.s.}  
$$
and, if $(\sigma_{(\gamma,1-\gamma)'}^\circ)^2>0$, then
$$
\frac{S_n-n\mu_{(\gamma,1-\gamma)'}^\circ}{\sqrt{n(\sigma_{(\gamma,1-\gamma)'}^\circ)^2}}\to_d N(0,1)\text{ as }n\to\infty.
$$
\end{theorem}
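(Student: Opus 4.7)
The plan is to apply Theorem~\ref{SLLN} directly to the augmented chain $\{X_n^\circ\}_{n\ge 0}$ with one-step transition matrix $\bm P_{C'}^\circ$ and payoff matrix $\bm W^\circ$, exactly as was done in establishing Theorem~\ref{SLLN/CLT-PBcirc}. The identification of the formulas for $\mu_{(\gamma,1-\gamma)'}^\circ$ and $(\sigma_{(\gamma,1-\gamma)'}^\circ)^2$ with the generic $\mu$ and $\sigma^2$ of Theorem~\ref{SLLN} is immediate from the definitions, and the function $w^\circ$ records exactly the profit from the last turn (with value $0$ on turns when game $A'$ is played, since game $A'$ is a pure redistribution of capital). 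Thus the only substantive task is verifying ergodicity of $\bm P_{C'}^\circ$ on its accessible states.

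Inspecting \eqref{PC'circ1}--\eqref{PC'circ8}, I would first note that the two states $(\bm 0, 1)$ and $(\bm 1, -1)$ are inaccessible: game $A'$ transitions always land in states with $s = 0$, while a game $B$ transition to $(\eta_x, 1)$ requires $\eta(x) = 0$ and hence $\eta_x \neq \bm 0$, with a symmetric observation ruling out $(\bm 1, -1)$. Restricting attention to $\Sigma^\circ \setminus \{(\bm 0, 1), (\bm 1, -1)\}$, I would verify irreducibility by writing $\bm P_{C'}^\circ = \gamma\bm P_{A'}^\circ + (1 - \gamma)\bm P_B^\circ$ and observing that every transition available under $\bm P_B^\circ$ persists under $\bm P_{C'}^\circ$ with probability at least $(1-\gamma)$ times the corresponding $\bm P_B^\circ$ probability. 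Since $\bm P_B^\circ$ is already ergodic on this restricted space under the stated hypotheses on $(p_0, p_1, p_2, p_3)$---the content of Theorem~\ref{SLLN/CLT-PBcirc}---irreducibility transfers to $\bm P_{C'}^\circ$, and aperiodicity follows from the persistence of the self-loop transitions in $\bm P_B^\circ$.

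With ergodicity on the accessible states established, and with the stationary mass on the two inaccessible states being $0$, Theorem~\ref{SLLN} delivers both the SLLN and the CLT (when $(\sigma_{(\gamma,1-\gamma)'}^\circ)^2 > 0$) as asserted. The main obstacle is conceptual rather than technical: one must confirm that adding the game $A'$ transitions---which change $\eta$ at two adjacent coordinates while forcing $s = 0$---does not disrupt the irreducibility already in place for $\bm P_B^\circ$. Since those transitions only add positive mass to certain entries of the transition matrix without erasing the entries governing $\bm P_B^\circ$, the verification is routine and parallels the argument behind Theorem~\ref{SLLN/CLT-PBcirc}.
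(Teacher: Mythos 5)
Your proposal is correct and follows essentially the same route as the paper, which simply applies Theorem~\ref{SLLN} to the augmented chain with transition matrix $\bm P_{C'}^\circ=\gamma\bm P_{A'}^\circ+(1-\gamma)\bm P_B^\circ$ and payoff function $w^\circ$, with ergodicity inherited from game $B$. The only detail worth tightening is that in the enlarged space $\{0,1\}^N\times\{-1,0,1\}$ the states $(\bm 0,0)$ and $(\bm 1,0)$ are also inaccessible (game $A'$ always leaves adjacent players with opposite statuses), but this affects nothing since ergodicity on the accessible class is all that is needed.
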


Let us define
\begin{align*}
\mu_{(\gamma,1-\gamma)'}&:=\bm\pi_{C'}\dot{\bm P}_{C'}\bm1,\\
\sigma_{(\gamma,1-\gamma)'}^2&:=\bm\pi_{C'}\ddot{\bm P}_{C'}\bm 1-(\bm\pi_{C'}\dot{\bm P}_{C'}\bm 1)^2+2\bm\pi_{C'}\dot{\bm P}_{C'}(\bm Z_{C'}-\bm1\bm\pi_{C'})\dot{\bm P}_{C'}\bm 1,
\end{align*}
where $\bm1$ is the column vector of 1s of the appropriate dimension, and since $\dot{\bm P}_{A'}$ can be defined to be $\bm0$, $\dot{\bm P}_{C'}$ is $(1-\gamma)\dot{\bm P}_B$ with each $q_m$ replaced by $-q_m$, and $\ddot{\bm P}_{C'}=(1-\gamma)\bm P_B$.  This ``rule of thumb'' for $\dot{\bm P}_{C'}$ requires some caution:  It must be applied before any simplifications to $\bm P_{C'}$ are made using $q_m=1-p_m$.  Of course, $\bm\pi_{C'}$ is the unique stationary distribution, and $\bm Z_{C'}$ is the fundamental matrix, of $\bm P_{C'}$.  Notice that $\dot{\bm P}_{A'}^\circ=\bm0$, so $\dot{\bm P}_{C'}^\circ=(1-\gamma)\dot{\bm P}_B^\circ$ and $\ddot{\bm P}_{C'}^\circ=(1-\gamma)\ddot{\bm P}_B^\circ$.

\begin{theorem}\label{means,variances,C'}
\begin{equation}\label{means}
\mu_{(\gamma,1-\gamma)'}^\circ=\mu_{(\gamma,1-\gamma)'}
\end{equation}
and
\begin{equation}\label{variances}
(\sigma_{(\gamma,1-\gamma)'}^\circ)^2=\sigma_{(\gamma,1-\gamma)'}^2.
\end{equation}
\end{theorem}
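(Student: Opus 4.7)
The plan is to mirror, almost verbatim, the proof of Theorem~\ref{means,variances,B} (which the preceding remark attributes to Ethier and Lee~\cite{EL17}); the only novelty is that the profit coordinate $s$ now ranges over $\{-1,0,1\}$ instead of $\{-1,1\}$, to accommodate the zero payoffs produced by game $A'$. The structural fact that drives the argument is that $\bm P_{C'}^\circ((\eta,s),\cdot)$, as given by \eqref{PC'circ1}--\eqref{PC'circ8}, does not depend on $s$. Consequently the projection $(\eta,s)\mapsto\eta$ is Markovian with image transition matrix $\bm P_{C'}=\gamma\bm P_{A'}+(1-\gamma)\bm P_B$, and the $\eta$-marginal $\sum_s\pi_{C'}^\circ(\eta,s)$ of $\bm\pi_{C'}^\circ$ must coincide with $\bm\pi_{C'}$.

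For the mean, since $w^\circ((\eta,s),(\eta',t))$ equals the recorded profit $t$ on every transition, I would compute
$$
\dot{\bm P}_{C'}^\circ\bm1(\eta,s)=\sum_{(\eta',t)}P_{C'}^\circ((\eta,s),(\eta',t))\,t,
$$
observe that the right-hand side depends on $\eta$ alone, and verify it equals the rule-of-thumb entry $\dot{\bm P}_{C'}\bm1(\eta)$ by unwinding the replacement $q_m\mapsto-q_m$ in \eqref{PC'circ1}--\eqref{PC'circ8}. Because $\dot{\bm P}_{A'}=\bm0$, only the game-$B$ part contributes and the bookkeeping is identical to that in Theorem~\ref{means,variances,B} up to the factor $1-\gamma$. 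Summing against $\bm\pi_{C'}^\circ$ and marginalizing over $s$ then yields \eqref{means}.

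For the variance, the cleanest route is a probabilistic interpretation: both sides of \eqref{variances} should equal the CLT-level asymptotic variance $\lim_{n\to\infty}\Var(S_n)/n$. For the augmented expression this is immediate from Theorem~\ref{SLLN} applied to $\bm P_{C'}^\circ$. For the non-augmented expression, I would first establish the payoff-moment identities
$$
\bm\pi_{C'}\ddot{\bm P}_{C'}\bm1=\E_{\bm\pi_{C'}}[\xi_1^2],\qquad\bm\pi_{C'}\dot{\bm P}_{C'}\bm P_{C'}^{k-1}\dot{\bm P}_{C'}\bm1=\E_{\bm\pi_{C'}}[\xi_1\xi_{1+k}]\;\;(k\ge1),
$$
where $\xi_1,\xi_2,\ldots$ is the actual payoff sequence of game $C'$ under stationarity, and then invoke the fundamental-matrix identity $\bm Z_{C'}-\bm1\bm\pi_{C'}=\sum_{k\ge1}(\bm P_{C'}^{k-1}-\bm1\bm\pi_{C'})$ to recognize $\sigma_{(\gamma,1-\gamma)'}^2$ as $\Var_{\bm\pi_{C'}}(\xi_1)+2\sum_{k\ge1}\operatorname{Cov}_{\bm\pi_{C'}}(\xi_1,\xi_{1+k})$, the same asymptotic variance.

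The main obstacle is the autocorrelation identity in the display above, since in the non-augmented picture $\xi_1$ is not a function of the pair $(X_0,X_1)$ alone: a self-transition $\eta\to\eta$ can arise either from a winner winning or from a loser losing, with opposite payoffs. I would resolve this by conditioning on the internal randomness that produces the first step, so that $\E[\xi_1\mathbf1_{X_1=\eta_1}\mid X_0=\eta_0]=\dot{\bm P}_{C'}(\eta_0,\eta_1)$ by the very definition of the rule of thumb; since the future $X_2,X_3,\ldots$ depends on that randomness only through $X_1$, the expectation factors via a single application of the Markov property, and iterating produces the stated autocorrelation identity. This is precisely the step from Ethier and Lee~\cite{EL17} that carries over once game $A'$'s zero-payoff transitions are accounted for via $\dot{\bm P}_{A'}=\bm0$.
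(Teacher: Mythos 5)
Your proof is correct. The paper gives no explicit argument for this theorem --- it simply notes that Theorem~\ref{means,variances,B} ``holds in this framework without change,'' and that proof is in turn attributed to Ethier and Lee \cite{EL17} --- and your reconstruction (the $s$-independence of the rows of $\bm P_{C'}^\circ$, hence lumpability and the marginalization $\sum_s\pi_{C'}^\circ(\eta,s)=\pi_{C'}(\eta)$; the verification that $\dot{\bm P}_{C'}^\circ\bm1$ collapses to the rule-of-thumb $\dot{\bm P}_{C'}\bm1$; and the identification of both variance formulas with $\Var_{\bm\pi_{C'}}(\xi_1)+2\sum_{k\ge1}\operatorname{Cov}_{\bm\pi_{C'}}(\xi_1,\xi_{1+k})$ via the conditioning step that handles the ambiguous self-transitions) supplies exactly the details the paper is invoking.
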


\subsection{Application to game $C':=(A')^r B^s$}\label{SLLN-game[r,s]}

Next we need versions of the SLLN and the CLT suited to game $C':=(A')^r B^s$.  The key result is Theorem \ref{SLLN2}.  

For the same reason as before, the theorem does not apply directly to $\bm P_{A'}$ and $\bm P_B$.  Therefore we again consider the Markov chains in the augmented state space $\Sigma^\circ:=\{0,1\}^N\times\{-1,0,1\}$ with one-step transition matrix $\bm P_{A'}^\circ$ and $\bm P_B^\circ$.  The definitions are as in \eqref{PC'circ1}--\eqref{PC'circ8} with $\gamma=1$ or $\gamma=0$.  With $\bm W^\circ$ as before, the theorem applies.

Fix $r,s\ge1$.  Assume that $\bm P^\circ:=(\bm P_{A'}^\circ)^r(\bm P_B^\circ)^s$, as well as all cyclic permutations of $(\bm P_{A'}^\circ)^r(\bm P_B^\circ)^s$, are ergodic, and let the row vector $\bm\pi^\circ$ be the unique stationary distribution of $\bm P^\circ$.  Let
\begin{equation*}
\mu_{[r,s]'}^\circ:=\frac{1}{r+s}\sum_{v=0}^{s-1}\bm\pi^\circ(\bm P_{A'}^\circ)^r(\bm P_B^\circ)^v\dot{\bm P}_B^\circ\bm1
\end{equation*}
and
\begin{align*}
&(\sigma_{[r,s]'}^\circ)^2\\
&=\frac{1}{r+s}\bigg\{s-\sum_{v=0}^{s-1}(\bm\pi^\circ(\bm P_{A'}^\circ)^r(\bm P_B^\circ)^v\dot{\bm P}_B^\circ\bm1)^2\\
&\qquad\;\;{}+2\bigg[\sum_{0\le u<v\le s-1}\bm\pi^\circ(\bm P_{A'}^\circ)^r(\bm P_B^\circ)^u\dot{\bm P}_B^\circ((\bm P_B^\circ)^{v-u-1}\\
&\qquad\qquad\qquad\qquad\qquad\qquad{}-\bm1\bm\pi^\circ(\bm P_{A'}^\circ)^r(\bm P_B^\circ)^v)\dot{\bm P}_B^\circ\bm1\\
&\qquad\;\;{}+\sum_{u=0}^{s-1}\sum_{v=0}^{s-1}\bm\pi^\circ(\bm P_{A'}^\circ)^r(\bm P_B^\circ)^u\dot{\bm P}_B^\circ(\bm P_B^\circ)^{s-u-1}(\bm Z^\circ-\bm1\bm\pi^\circ)(\bm P_{A'}^\circ)^r(\bm P_B^\circ)^v\dot{\bm P}_B^\circ\bm1\bigg]\bigg\}.
\end{align*}
Let $\{X_n^\circ\}_{n\ge0}$ be a nonhomogeneous Markov chain in $\Sigma^\circ$ with one-step transition matrices $\bm P_{A'}^\circ,\ldots,\bm P_{A'}^\circ$ $(r\text{ times})$, $\bm P_B^\circ,\ldots,\bm P_B^\circ$ $(s\text{ times})$, $\bm P_{A'}^\circ,\ldots,\bm P_{A'}^\circ$ $(r\text{ times})$, $\bm P_B^\circ,\ldots,\bm P_B^\circ$ $(s\text{ times})$, and so on.  For each $n\ge1$, define $\xi_n:=w^\circ(X_{n-1}^\circ,X_n^\circ)$ and $S_n:=\xi_1+\cdots+\xi_n$.

\begin{theorem}\label{SLLN/CLT2}
Under the above assumptions, and with the distribution of $X_0$ arbitrary, 
$$
\frac{S_n}{n}\to\mu_{[r,s]'}^\circ\;\;\emph{a.s.}  
$$
and, if $(\sigma_{[r,s]'}^\circ)^2>0$, then
$$
\frac{S_n-n\mu_{[r,s]'}^\circ}{\sqrt{n(\sigma_{[r,s]'}^\circ)^2}}\to_d N(0,1) \text{ as } n\to\infty.
$$
\end{theorem}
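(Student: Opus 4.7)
The plan is to apply Theorem~\ref{SLLN2} directly to the augmented chain on $\Sigma^\circ$, taking $\bm P_A := \bm P_{A'}^\circ$, $\bm P_B := \bm P_B^\circ$, and payoff matrix $\bm W := \bm W^\circ$, and then to simplify the general mean and variance formulas using the special structure of $w^\circ$ on game-$A'$ transitions. The ergodicity hypotheses needed for Theorem~\ref{SLLN2} are assumed in the statement of the theorem, so nothing needs to be checked on that front.

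First I would verify that $\xi_n := w^\circ(X_{n-1}^\circ, X_n^\circ)$ is in fact the profit to the set of $N$ players on turn $n$. This is a short case analysis against the definition of $w^\circ$: on a game-$B$ transition $(\eta,s)\to(\eta_x,\pm 1)$ the last coordinate of the target records the actual profit and the side condition $\eta(x)=(1\mp 1)/2$ in the definition of $w^\circ$ is exactly the flipping rule; on a game-$B$ self-transition $(\eta,s)\to(\eta,\pm 1)$ the payoff is read off the second coordinate directly; and on any game-$A'$ transition the target has last coordinate $0$, so neither defining case of $w^\circ$ applies and $w^\circ=0$, consistent with the fact that game $A'$ merely redistributes capital among the players.

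The same case analysis supplies the key algebraic fact driving the simplification: $\dot{\bm P}_{A'}^\circ = \bm P_{A'}^\circ \circ \bm W^\circ = \bm 0$, and hence $\ddot{\bm P}_{A'}^\circ = \bm 0$ as well. Plugging this into the general formula of Theorem~\ref{SLLN2} annihilates the sum $\sum_{u=0}^{r-1}\bm\pi^\circ(\bm P_{A'}^\circ)^u\dot{\bm P}_{A'}^\circ\bm 1$ in the mean, leaving exactly the stated expression for $\mu_{[r,s]'}^\circ$. In the variance every term containing $\dot{\bm P}_{A'}^\circ$ or $\ddot{\bm P}_{A'}^\circ$ drops out. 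For the one surviving $\ddot{\bm P}_B^\circ$ term, I would note that $w^\circ$ takes values in $\{-1,0,1\}$ with $(w^\circ)^2 = 1$ precisely on the entries where $\bm P_B^\circ$ is nonzero, so $\ddot{\bm P}_B^\circ = \bm P_B^\circ$; then row-stochasticity of $\bm P_{A'}^\circ$ and $\bm P_B^\circ$, together with $\bm\pi^\circ\bm 1 = 1$, collapses $\sum_{v=0}^{s-1}\bm\pi^\circ(\bm P_{A'}^\circ)^r(\bm P_B^\circ)^v\bm P_B^\circ\bm 1$ to $s$, producing the bare ``$s$'' inside the braces.

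The one mildly delicate step, and the one I would treat most carefully, is the bookkeeping in the variance reduction: Theorem~\ref{SLLN2} contains nine summands (four of them bundled as the final block of double sums), and one must check that exactly the right ones survive the substitutions $\dot{\bm P}_{A'}^\circ = \ddot{\bm P}_{A'}^\circ = \bm 0$ and $\ddot{\bm P}_B^\circ = \bm P_B^\circ$ and then reassemble into the stated expression for $(\sigma_{[r,s]'}^\circ)^2$. Once that matching is in hand, both the SLLN and the CLT conclusions transfer immediately from Theorem~\ref{SLLN2}.
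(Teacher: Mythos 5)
Your proposal is correct and matches the paper's approach: the paper likewise obtains this result by applying Theorem~\ref{SLLN2} to the augmented chains with transition matrices $\bm P_{A'}^\circ$, $\bm P_B^\circ$ and payoff matrix $\bm W^\circ$, with the displayed formulas for $\mu_{[r,s]'}^\circ$ and $(\sigma_{[r,s]'}^\circ)^2$ being exactly the simplifications that result from $\dot{\bm P}_{A'}^\circ=\ddot{\bm P}_{A'}^\circ=\bm 0$ and $\ddot{\bm P}_B^\circ=\bm P_B^\circ$. Your verification that $w^\circ$ records the true profit and your bookkeeping of which variance terms survive are precisely the (unstated) details behind the paper's remark that ``the theorem applies.''
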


Again there are simpler expressions for this mean and variance.  We define $\mu_{[r,s]'}$ in terms of $\bm\pi$, $\bm P_{A'}$, $\bm P_B$, and $\dot{\bm P}_B$ in the same way that $\mu_{[r,s]'}^\circ$ was defined in terms of $\bm\pi^\circ$, $\bm P_{A'}^\circ$, $\bm P_B^\circ$, and $\dot{\bm P}_B^\circ$.  ($\dot{\bm P}_B$ is defined by the rule of thumb.)  Finally, $\sigma_{[r,s]'}^2$ is defined analogously to $(\sigma_{[r,s]'}^\circ)^2$.

\begin{theorem}\label{means,variances[r,s]}
\begin{equation}\label{means-pattern}
\mu_{[r,s]'}^\circ=\mu_{[r,s]'}
\end{equation}
and 
\begin{equation}\label{variances-pattern}
(\sigma_{[r,s]'}^\circ)^2=\sigma_{[r,s]'}^2.
\end{equation}
\end{theorem}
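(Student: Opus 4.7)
The plan is to prove Theorem \ref{means,variances[r,s]} by extending to the periodic-pattern setting the argument of Ethier and Lee \cite{EL17} that is used in the remark after Theorem \ref{means,variances,B}. The key structural fact is that the augmented transition entries $P^\circ((\eta,s),(\eta',s'))$ do not depend on the incoming profit coordinate $s$. Consequently, if $\bm v^\circ$ is a column vector on $\Sigma^\circ$ whose entries depend only on $\eta$ (i.e., $v^\circ((\eta,s))=v(\eta)$), then $(\bm P_B^\circ \bm v^\circ)((\eta,s)) = \sum_{\eta'} v(\eta')\sum_{s'} P_B^\circ((\eta,s),(\eta',s')) = (\bm P_B \bm v)(\eta)$, because summing the augmented entries over $s'$ collapses to the unaugmented entry from $\eta$ to $\eta'$; the same holds with $\bm P_{A'}^\circ$ in place of $\bm P_B^\circ$.

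The core lemma I would establish reads: for any finite product $\bm M^\circ$ built from $\bm P_{A'}^\circ$, $\bm P_B^\circ$, $\dot{\bm P}_B^\circ$, $\ddot{\bm P}_B^\circ$, and $\bm Z^\circ-\bm 1\bm\pi^\circ$, and any $\bm v^\circ$ depending only on $\eta$, the vector $\bm M^\circ \bm v^\circ$ again depends only on $\eta$ and agrees there with $\bm M \bm v$, where $\bm M$ is the analogous product with $\bm P_{A'}^\circ\mapsto\bm P_{A'}$, $\bm P_B^\circ\mapsto\bm P_B$, $\dot{\bm P}_B^\circ\mapsto\dot{\bm P}_B$ (rule of thumb), $\ddot{\bm P}_B^\circ\mapsto\bm P_B$, and $\bm Z^\circ-\bm 1\bm\pi^\circ\mapsto\bm Z-\bm\Pi$. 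The decisive case is $\dot{\bm P}_B^\circ$: since $w^\circ((\eta,s),(\eta',t))=t$ on the support of $\bm P_B^\circ$, one has $(\dot{\bm P}_B^\circ \bm v^\circ)((\eta,s)) = \sum_{\eta'} v(\eta')[P_B^\circ((\eta,s),(\eta',1)) - P_B^\circ((\eta,s),(\eta',-1))]$, and a direct computation from \eqref{PB1}--\eqref{PB2} shows that this bracketed net flow is exactly $\dot{P}_B(\eta,\eta')$ produced by the rule of thumb. The case $\ddot{\bm P}_B^\circ$ is easier: $(s')^2=1$ on the support of $\bm P_B^\circ$, so $\ddot{\bm P}_B^\circ=\bm P_B^\circ$, matching $\ddot{\bm P}_B=\bm P_B$. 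Because $\dot{\bm P}_{A'}^\circ=\bm 0$ and $\ddot{\bm P}_{A'}^\circ=\bm 0$ (game $A'$ contributes $s'=0$ on its support), the many terms of Theorem \ref{SLLN2} that contain $\dot{\bm P}_{A'}^\circ$ or $\ddot{\bm P}_{A'}^\circ$ vanish, and the same drop-out occurs on the unaugmented side, leaving identical reduced lists of summands.

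For the fundamental-matrix factor, I would expand $\bm Z^\circ = \bm I + \sum_{k\ge 1}[(\bm P^\circ)^k - \bm 1\bm\pi^\circ]$ (valid because $\bm P^\circ \bm 1=\bm 1$ and $\bm\pi^\circ \bm P^\circ=\bm\pi^\circ$), apply the lemma to each power $(\bm P^\circ)^k$, and recognize the sum as $\bm Z \bm v$ of the unaugmented chain; hence $(\bm Z^\circ-\bm 1\bm\pi^\circ)\bm v^\circ$ depends only on $\eta$ and equals $(\bm Z-\bm\Pi)\bm v$. For the leftmost factor, I would use the dual fact that the marginal of $\bm\pi^\circ$ on $\{0,1\}^N$ is $\bm\pi$: summing $\bm\pi^\circ \bm P^\circ=\bm\pi^\circ$ over the $s$-coordinate shows that the marginal is stationary for $\bm P$ and hence equals $\bm\pi$ by uniqueness, whence $\bm\pi^\circ \bm w^\circ = \bm\pi \bm w$ for every $\bm w^\circ$ depending only on $\eta$. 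Combining this with the lemma, each summand $\bm\pi^\circ\bm M^\circ \dot{\bm P}_B^\circ \bm N^\circ \dot{\bm P}_B^\circ \bm 1$ appearing in $(\sigma_{[r,s]'}^\circ)^2$ collapses termwise to the corresponding summand of $\sigma_{[r,s]'}^2$, and similarly for $\mu_{[r,s]'}^\circ$, yielding \eqref{means-pattern} and \eqref{variances-pattern}.

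The hard part is the bookkeeping in the variance formula, where $\dot{\bm P}_B^\circ$ appears in intermediate positions between powers of $\bm P_{A'}^\circ$ and $\bm P_B^\circ$; once the identity $(\dot{\bm P}_B^\circ \bm v^\circ)((\eta,s))=(\dot{\bm P}_B \bm v)(\eta)$ is established for arbitrary $\eta$-dependent $\bm v^\circ$, each term becomes a mechanical right-to-left application of the lemma. A subtler point is that the rule-of-thumb definition of $\dot{\bm P}_B$ must be applied to \eqref{PB1}--\eqref{PB2} \emph{before} any simplification using $q_m=1-p_m$; this is precisely the simplification one must avoid in order for the net-flow computation on the augmented side to match term by term.
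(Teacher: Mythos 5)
Your proposal is correct and follows essentially the same route as the paper, which proves Theorem \ref{means,variances[r,s]} by the method used for \eqref{means} and \eqref{variances}, i.e., the collapsing argument of Ethier and Lee \cite{EL17}: the augmented transition probabilities are independent of the incoming profit coordinate, $\eta$-dependent vectors are preserved under right multiplication with each factor collapsing to its unaugmented (rule-of-thumb) counterpart, and the marginal of $\bm\pi^\circ$ on $\{0,1\}^N$ is $\bm\pi$. Your net-flow identification of $\dot{\bm P}_B^\circ$ with the rule-of-thumb $\dot{\bm P}_B$, the observation that $\dot{\bm P}_{A'}^\circ=\bm 0$ and $\ddot{\bm P}_B^\circ=\bm P_B^\circ$, and the series expansion of $\bm Z^\circ$ are exactly the ingredients that argument requires.
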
 

\begin{proof}
Eqs.~\eqref{means-pattern} and \eqref{variances-pattern} are proved in the same way as \eqref{means} and \eqref{variances}.
\end{proof}

\section{Numerical computations}\label{numerical}
In this section, we compute various means numerically by using the reduced state space and use computer graphics to visualize the Parrondo region of the Parrondo games of Xie at al.~\cite{XC11}.

\subsection{State-space reduction} 

Let us begin by explaining what we mean by state-space reduction, which is an important method for simplifying our computations.

In general, consider an equivalence relation $\sim$ on a finite set $E$.  By definition, $\sim$ is \textit{reflexive} ($x\sim x$), \textit{symmetric} ($x\sim y$ implies $y\sim x$), and \textit{transitive} ($x\sim y$ and $y\sim z$ imply $x\sim z$).  It is well known that an equivalence relation partitions the set $E$ into \textit{equivalence classes}.  The set of all equivalence classes, called the \textit{quotient set}, will be denoted by $\bar E$.  Let us write $[x]:=\{y\in E: y\sim x\}$ for the equivalence class containing $x$.  Then $\bar E=\{[x]:x\in E\}$.

Now suppose $X_0,X_1,X_2,\ldots$ is a (time-homogeneous) Markov chain in $E$ with transition matrix $\bm P$.  In particular, $P(x,y)=\P(X_{t+1}=y\mid X_t=x)$ for all $x,y\in E$ and $t=0,1,2,\ldots$.  Under what conditions on $\bm P$ is $[X_0],[X_1],[X_2],\ldots$ a Markov chain in the ``reduced'' state space $\bar E$?  A sufficient condition, apparently due to Kemeny and Snell \cite[p.~124]{KS76}, is that $\bm P$ be \textit{lumpable} with respect to $\sim$.  By definition, this means that, for all $x,x',y\in E$,
\begin{equation}\label{lumpability}
x\sim x'\quad\text{implies}\quad \sum_{y'\in[y]}P(x,y')=\sum_{y'\in[y]}P(x',y').
\end{equation}
Moreover, if \eqref{lumpability} holds, then the Markov chain $[X_0],[X_1],[X_2],\ldots$ in $\bar E$ has transition matrix $\bar{\bm P}$ given by
\begin{equation}\label{Pbar}
\bar P([x],[y]):=\sum_{y'\in[y]}P(x,y').
\end{equation}
Notice that \eqref{lumpability} ensures that \eqref{Pbar} is well defined.

For Parrondo games with one-dimensional spatial dependence, the state space, assuming $N\ge3$ players, is
$$
\{\eta=(\eta(1),\eta(2),\ldots,\eta(N)): \eta(x)\in\{0,1\}{\rm\ for\ }x=1,2,\ldots,N\}=\{0,1\}^N,
$$
which has $2^N$ states.  A state $\eta\in\{0,1\}^N$ describes the status of each of the $N$ players, 0 for losers and 1 for winners.  We can also think of $\{0,1\}^N$ as the set of $N$-bit binary representations of the integers $0,1,\ldots,2^N-1$, thereby giving a natural ordering to the vectors in $\{0,1\}^N$.

Ethier and Lee \cite{EL12a} used the following equivalence relation on $\{0,1\}^N$:  $\eta\sim\zeta$ if and only if $\zeta=\eta_\sigma:=(\eta(\sigma(1)),\ldots,\eta(\sigma(N)))$ for a permutation $\sigma$ of $(1,2,\ldots,N)$ belonging to the cyclic group $G$ of order $N$ of the rotations of the players.  If, in addition, $p_1=p_2$, the permutation $\sigma$ can belong to the dihedral group $G$ of order $2N$ of the rotations and reflections of the players.  They verified the lumpability condition, with the result that the size of the state space was reduced by a factor of nearly $N$ (or $2N$ if $p_1=p_2$) for large $N$.  It should be noted that a sufficient condition for the lumpability condition in this setting is that, for every $\eta,\zeta\in\{0,1\}^N$,
\begin{equation*}
P(\eta_\sigma,\zeta_\sigma)=P(\eta,\zeta)\quad\text{for all $\sigma\in G$}
\end{equation*}
or for all $\sigma$ in a subset of $G$ that generates $G$.

To fully justify this, the following lemma is useful.

\begin{lemma}[Ethier and Lee \cite{EL12b}]
Fix $N\ge3$, let $G$ be a subgroup of the symmetric group $S_N$.  Let $\bm P$ be the one-step transition matrix for a Markov chain in $\{0,1\}^N$ with a unique stationary distribution $\bm\pi$.  Assume that
\begin{equation}\label{G-invariance}
P(\eta_\sigma,\zeta_\sigma)=P(\eta,\zeta),\qquad \sigma\in G,\;\eta,\zeta\in\{0,1\}^N.
\end{equation}
Then $\pi(\eta_\sigma)=\pi(\eta)$ for all $\sigma\in G$ and $\eta\in\{0,1\}^N$.

Let us say that $\eta\in\{0,1\}^N$ is equivalent to $\zeta\in\{0,1\}^N$ (written $\eta\sim\zeta$) if there exists $\sigma\in G$ such that $\zeta=\eta_\sigma$, and let us denote the equivalence class containing $\eta$ by $[\eta]$.  Then, in addition, $\bm P$ induces a one-step transition matrix $\bar{\bm P}$ for a Markov chain in the quotient set (i.e., the set of equivalence classes) $\bar\Sigma$ defined by the formula
\begin{equation*}
\bar{P}([\eta],[\zeta]):=\sum_{\zeta'\in[\zeta]}P(\eta,\zeta'),
\end{equation*}
Furthermore, if $\bar{\bm P}$ has a unique stationary distribution $\bar{\bm\pi}$, then the unique stationary distribution $\bm\pi$ is given by $\pi(\eta)=\bar{\pi}([\eta])/|[\eta]|$, where $|[\eta]|$ denotes the cardinality of the equivalence class $[\eta]$.
\end{lemma}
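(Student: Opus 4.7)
The plan is to establish the three claims of the lemma in the order stated. First I would prove $G$-invariance of $\bm\pi$ by a standard uniqueness argument: for each $\sigma\in G$, define the row vector $\bm\pi^{\sigma}$ by $\pi^\sigma(\eta):=\pi(\eta_{\sigma^{-1}})$. Using the hypothesis \eqref{G-invariance}, compute
$$
(\bm\pi^\sigma\bm P)(\zeta)=\sum_\eta\pi(\eta_{\sigma^{-1}})P(\eta,\zeta)=\sum_{\eta'}\pi(\eta')P(\eta'_\sigma,\zeta)=\sum_{\eta'}\pi(\eta')P(\eta',\zeta_{\sigma^{-1}})=\pi(\zeta_{\sigma^{-1}})=\pi^\sigma(\zeta),
$$
after substituting $\eta'=\eta_{\sigma^{-1}}$ and invoking \eqref{G-invariance} in reverse. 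Since $\bm\pi^\sigma$ is a probability vector and $\bm\pi$ is the unique stationary distribution, $\bm\pi^\sigma=\bm\pi$, giving $\pi(\eta_\sigma)=\pi(\eta)$.

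Next I would verify the Kemeny--Snell lumpability condition \eqref{lumpability}. Given $\eta\sim\eta'$, write $\eta'=\eta_\sigma$ for some $\sigma\in G$. Then
$$
\sum_{\zeta'\in[\zeta]}P(\eta_\sigma,\zeta')=\sum_{\zeta'\in[\zeta]}P(\eta,\zeta'_{\sigma^{-1}})=\sum_{\zeta''\in[\zeta]}P(\eta,\zeta''),
$$
where the first equality uses \eqref{G-invariance} and the second uses the fact that $\zeta'\mapsto\zeta'_{\sigma^{-1}}$ permutes $[\zeta]$. This is exactly \eqref{lumpability}, so Kemeny--Snell yields a Markov chain on $\bar\Sigma$ with transition matrix $\bar{\bm P}$ given by the stated formula.

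For the last assertion I would define $\tilde\pi(\eta):=\bar\pi([\eta])/|[\eta]|$ and show $\tilde{\bm\pi}\bm P=\tilde{\bm\pi}$; uniqueness of $\bm\pi$ then forces $\tilde{\bm\pi}=\bm\pi$. The key auxiliary identity is
$$
\sum_{\eta'\in[\eta]}P(\eta',\zeta_0)=\frac{|[\eta]|}{|[\zeta_0]|}\,\bar{P}([\eta],[\zeta_0])\quad\text{for any }\zeta_0\in[\zeta].
$$
To establish it, sum $P(\eta_\sigma,\zeta_0)=P(\eta,(\zeta_0)_{\sigma^{-1}})$ over all $\sigma\in G$; by orbit--stabilizer, each element of $[\eta]$ is attained $|G|/|[\eta]|$ times on the left and each element of $[\zeta_0]$ is attained $|G|/|[\zeta_0]|$ times on the right. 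Substituting this identity into $(\tilde{\bm\pi}\bm P)(\zeta_0)=\sum_{[\eta]}(\bar\pi([\eta])/|[\eta]|)\sum_{\eta'\in[\eta]}P(\eta',\zeta_0)$, the $|[\eta]|$ factors cancel and the outer sum collapses to $\bar\pi([\zeta])/|[\zeta]|=\tilde\pi(\zeta_0)$ by stationarity of $\bar{\bm\pi}$.

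The routine pieces are Steps 1--2, which are essentially bookkeeping with \eqref{G-invariance}. The main obstacle, and the only place where any nontrivial combinatorics enters, is the orbit--stabilizer identity in Step 3 that relates the ``row sum'' $\sum_{\eta'\in[\eta]}P(\eta',\zeta_0)$ (which is what stationarity of $\tilde{\bm\pi}$ produces) to the ``column sum'' $\sum_{\zeta'\in[\zeta]}P(\eta,\zeta')=\bar P([\eta],[\zeta])$ (which is what the definition of $\bar{\bm P}$ gives); getting the ratio $|[\eta]|/|[\zeta]|$ right is the one step that actually uses the group structure rather than just \eqref{G-invariance}.
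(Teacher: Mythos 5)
Your proof is correct: the uniqueness argument for $G$-invariance of $\bm\pi$, the verification of Kemeny--Snell lumpability, and the orbit--stabilizer counting that yields $\sum_{\eta'\in[\eta]}P(\eta',\zeta_0)=(|[\eta]|/|[\zeta_0]|)\,\bar P([\eta],[\zeta_0])$ all check out (the last being the only nontrivial step, and you get the ratio right). The paper itself gives no proof, deferring entirely to the cited reference of Ethier and Lee, and your argument is the standard one that that reference uses.
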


The lemma will apply to $\bm P_{A'}$ and $\bm P_B$ (hence $\bm P_{C'}$) if we can verify \eqref{G-invariance} for $G$ being the cyclic group of rotations or, if $p_1=p_2$, the dihedral group of rotations and reflections.

The practical effect of this is that we can reduce the size of the state space (namely, $2^N$) to what we will call its \textit{effective size}, which is simply the number of equivalence classes.  For example, if $N=3$, there are eight states and four equivalence classes, namely
$$
0=\{000\},\quad
1=\{001,010,100\},\quad
2=\{011,101,110\},\quad
3=\{111\}.
$$
Notice that we label equivalence classes by the number of 1s each element has.  If $N=4$, there are 16 states and six equivalence classes, namely
\begin{align*}
0&=\{0000\},\\
1&=\{0001,0010,0100,1000\},\\
2&=\{0011,0110,1001,1100\},\\
2'&=\{0101,1010\},\\
3&=\{0111,1011,1101,1110\},\\
4&=\{1111\}.
\end{align*}
In these two cases, it does not matter which of the two equivalence relations we use; the result is the same.

The number of equivalence classes with $G$ being the group of cyclic permutations follows the sequence A000031 in the \textit{The On-Line Encyclopedia of Integer Sequences} (Sloan \cite{S19}), described as the number of necklaces with $N$ beads of two colors when turning over is not allowed.  There is an explicit formula in terms of Euler's phi-function.  If $p_1=p_2$, we can reverse the order of the players, and the number of equivalence classes with $G$ being the dihedral group follows the sequence A000029 in the \textit{OEIS}, described as the number of necklaces with $N$ beads of two colors when turning over is allowed.  Again there is an explicit formula.  

\begin{example}
To illustrate our approach in a tractable case, we focus on the case $N=4$, as did Xie et al.~\cite{XC11}.  Here $\{0,1\}^N$ has 16 states, ordered as the 4-bit binary representations of the number 0--15.  First, $\bm P_B$ has the form as in Eq.\ (12) of Xie et al.~\cite{XC11}.  For example, the diagonal entries of $4\bm P_B$ are
\begin{align*}
d_0&:=4 q_0,\\
d_1=d_2=d_4=d_8&:=p_0 + q_0 + q_1 + q_2,\\
d_3=d_6=d_9=d_{12}&:=p_1 + p_2 + q_1 + q_2,\\
d_5=d_{10}&:=2 (p_0 + q_3),\\
d_7=d_{11}=d_{13}=d_{14}&:=p_1 + p_2 + p_3 + q_3,\\
d_{15}&:=4 p_3,
\end{align*}
where $q_m:=1-p_m$ for $m=0,1,2,3$.  

For the equivalence relation above, there are six equivalence classes, namely
$\{0000\}$, $\{0001,0010,0100,1000\}$, $\{0011,0110,1001,1100\}$, $\{0101,1010\}$, $\{0111,\break 1011,1101, 1110\}$, and $\{1111\}$.  Denoting the states by their decimal representations (0--15),  the equivalence classes are $\{0\}$, $\{1,2,4,8\}$, $\{3,6,9,$ $12\}$, $\{5,10\}$, $\{7,11,13,14\}$, and $\{15\}$.  It will be convenient to reorder the states temporarily.  Within each equivalence class, we order elements so that each is a fixed rotation of the preceding one, that is, $\{0000\}$, $\{1000,0100,0010,0001\}$, $\{1100,0110,0011,\break 1001\}$, $\{1010,0101\}$, $\{1110,0111,1011,1101\}$, and $\{1111\}$, or $\{0\}$, $\{8,4,2,1\}$, $\{12,6,3,9\}$, $\{10,5\}$, $\{14,7,11,13\}$, and $\{15\}$.  We now order states in this order: 0, 8, 4, 2, 1, 12, 6, 3, 9, 10, 5, 14, 7, 11, 13, 15, which leads to an alternative form for the transition matrix shown in Figure~\ref{P_B-block}

\begin{figure}[htb]
\begin{center}
$$ 
\arraycolsep=1.1mm
\bm P_B:=\frac{1}{4}\left(
\begin{array}{c|cccc|cccc|cc|cccc|c}
d_0 & p_0 & p_0 & p_0 & p_0 & 0      &  0  & 0   & 0      & 0      & 0      & 0      & 0      & 0      & 0      & 0      \\
\noalign{\smallskip}\hline\noalign{\smallskip}
q_0 & d_8 & 0   & 0   & 0   & p_2    & 0   & 0   & p_1    & p_0    & 0      & 0      & 0      & 0      & 0      & 0      \\
q_0 & 0   & d_4 & 0   & 0   & p_1    & p_2 & 0   & 0      & 0      & p_0    & 0      & 0      & 0      & 0      & 0      \\  
q_0 & 0   & 0   & d_2 & 0   & 0      & p_1 & p_2 & 0      & p_0    & 0      & 0      & 0      & 0      & 0      & 0      \\ 
q_0 & 0   & 0   & 0   & d_1 & 0      & 0   & p_1 & p_2    & 0      & p_0    & 0      & 0      & 0      & 0      & 0      \\ 
\noalign{\smallskip}\hline\noalign{\smallskip}
0   & q_2 & q_1 & 0   & 0   & d_{12} & 0   & 0   & 0      & 0      & 0      & p_2    & 0      & 0      & p_1    & 0      \\ 
0   & 0   & q_2 & q_1 & 0   & 0      & d_6 & 0   & 0      & 0      & 0      & p_1    & p_2    & 0      & 0      & 0      \\ 
0   & 0   & 0   & q_2 & q_1 & 0      & 0   & d_3 & 0      & 0      & 0      & 0      & p_1    & p_2    & 0      & 0      \\ 
0   & q_1 & 0   & 0   & q_2 & 0      & 0   & 0   & d_9    & 0      & 0      & 0      & 0      & p_1    & p_2    & 0      \\ 
\noalign{\smallskip}\hline\noalign{\smallskip}
0   & q_0 & 0   & q_0 & 0   & 0      & 0   & 0   & 0      & d_{10} & 0      & p_3    & 0      & p_3    & 0      & 0      \\ 
0   & 0   & q_0 & 0   & q_0 & 0      & 0   & 0   & 0      & 0      & d_5    & 0      & p_3    & 0      & p_3    & 0      \\ 
\noalign{\smallskip}\hline\noalign{\smallskip}
0   & 0   & 0   & 0   & 0   & q_2    & q_1 & 0   & 0      & q_3    & 0      & d_{14} & 0      & 0      & 0      & p_3    \\
0   & 0   & 0   & 0   & 0   & 0      & q_2 & q_1 & 0      & 0      & q_3    & 0      & d_7    & 0      & 0      & p_3    \\ 
0   & 0   & 0   & 0   & 0   & 0      & 0   & q_2 & q_1    & q_3    & 0      & 0      & 0      & d_{11} & 0      & p_3    \\ 
0   & 0   & 0   & 0   & 0   & q_1    & 0   & 0   & q_2    & 0      & q_3    & 0      & 0      & 0      & d_{13} & p_3    \\ 
\noalign{\smallskip}\hline\noalign{\smallskip}
0   & 0   & 0   & 0   & 0   & 0      & 0   & 0   & 0      & 0      & 0      & q_3    & q_3    & q_3    & q_3    & d_{15}  \end{array}\right).
$$
\caption{\label{P_B-block}The transition matrix $\bm P_B$ for the Markov chain describing game $B$, with states ordered first by equivalence class and then in the order $\{0\}, \{8, 4, 2, 1\}, \{12, 6, 3, 9\}, \{10, 5\}, \{14, 7, 11, 13\}, \{15\}$.}
\end{center}
\end{figure}

The lumpability condition requires that, within each block, row sums be equal.  
That this condition is met can be seen at a glance.  Moreover, we can also see that the sufficient condition \eqref{G-invariance} holds as well.  Because of how we ordered the states, this condition requires that each block be constant along each diagonal parallel to the main diagonal (assuming periodic boundary conditions).  

We conclude that
\begin{footnotesize}
\begin{equation*}
\setlength{\arraycolsep}{0.7mm}
\bar{\bm P}_B=\frac{1}{4}\left(\begin{array}{cccccc}
4 q_0 & 4 p_0 & 0 & 0 & 0 & 0 \\ 
q_0 & p_0 + q_0 + q_1 + q_2 & p_1 + p_2 & p_0 & 0 & 0 \\ 
0 & q_1 + q_2 & p_1 + p_2 + q_1 + q_2 & 0 & p_1 + p_2 & 0 \\ 
0 & 2 q_0 & 0 & 2 (p_0 + q_3) & 2 p_3 & 0 \\ 
0 & 0 & q_1 + q_2 & q_3 & p_1 + p_2 + p_3 + q_3 & p_3 \\ 
0 & 0 & 0 & 0 & 4 q_3 & 4 p_3
\end{array}\right).
\end{equation*}
\end{footnotesize}

We turn next to game $A'$.  Again there are 16 states (namely, the 4-bit binary representations of the integers 0--15) and the transition matrix can be easily evaluated.  To verify the lumpability condition we reorder the states and rewrite the matrix in block form as we did for $\bm P_B$.  See Figure~\ref{P_A'-block}.

\begin{figure}[htb]
\begin{center}
$$
\arraycolsep=2mm
\bm P_{A'}:=\frac{1}{8}\left(
\begin{array}{c|cccc|cccc|cc|cccc|c}
0 & 2 & 2 & 2 & 2 & 0 & 0 & 0 & 0 & 0 & 0 & 0 & 0 & 0 & 0 & 0 \\ 
\noalign{\smallskip}\hline\noalign{\smallskip}
0 & 2 & 1 & 0 & 1 & 1 & 0 & 0 & 1 & 2 & 0 & 0 & 0 & 0 & 0 & 0 \\ 
0 & 1 & 2 & 1 & 0 & 1 & 1 & 0 & 0 & 0 & 2 & 0 & 0 & 0 & 0 & 0 \\ 
0 & 0 & 1 & 2 & 1 & 0 & 1 & 1 & 0 & 2 & 0 & 0 & 0 & 0 & 0 & 0 \\ 
0 & 1 & 0 & 1 & 2 & 0 & 0 & 1 & 1 & 0 & 2 & 0 & 0 & 0 & 0 & 0 \\ 
\noalign{\smallskip}\hline\noalign{\smallskip}
0 & 1 & 1 & 0 & 0 & 2 & 0 & 0 & 0 & 1 & 1 & 1 & 0 & 0 & 1 & 0 \\ 
0 & 0 & 1 & 1 & 0 & 0 & 2 & 0 & 0 & 1 & 1 & 1 & 1 & 0 & 0 & 0 \\ 
0 & 0 & 0 & 1 & 1 & 0 & 0 & 2 & 0 & 1 & 1 & 0 & 1 & 1 & 0 & 0 \\ 
0 & 1 & 0 & 0 & 1 & 0 & 0 & 0 & 2 & 1 & 1 & 0 & 0 & 1 & 1 & 0 \\ 
\noalign{\smallskip}\hline\noalign{\smallskip}
0 & 0 & 0 & 0 & 0 & 1 & 1 & 1 & 1 & 4 & 0 & 0 & 0 & 0 & 0 & 0 \\ 
0 & 0 & 0 & 0 & 0 & 1 & 1 & 1 & 1 & 0 & 4 & 0 & 0 & 0 & 0 & 0 \\ 
\noalign{\smallskip}\hline\noalign{\smallskip}
0 & 0 & 0 & 0 & 0 & 1 & 1 & 0 & 0 & 2 & 0 & 2 & 1 & 0 & 1 & 0 \\ 
0 & 0 & 0 & 0 & 0 & 0 & 1 & 1 & 0 & 0 & 2 & 1 & 2 & 1 & 0 & 0 \\ 
0 & 0 & 0 & 0 & 0 & 0 & 0 & 1 & 1 & 2 & 0 & 0 & 1 & 2 & 1 & 0 \\ 
0 & 0 & 0 & 0 & 0 & 1 & 0 & 0 & 1 & 0 & 2 & 1 & 0 & 1 & 2 & 0 \\ 
\noalign{\smallskip}\hline\noalign{\smallskip}
0 & 0 & 0 & 0 & 0 & 0 & 0 & 0 & 0 & 0 & 0 & 2 & 2 & 2 & 2 & 0\end{array}\right).
$$
\caption{\label{P_A'-block}The transition matrix $\bm P_{A'}$ for the Markov chain describing game $A'$, with states ordered first by equivalence class and then in the order $\{0\}, \{8, 4, 2, 1\}, \{12, 6, 3, 9\}, \{10, 5\}, \{14, 7, 11, 13\}, \{15\}$.}
\end{center}
\end{figure}

Again the condition is clearly met, and we have
\begin{equation*}
\setlength{\arraycolsep}{2mm}
\bar{\bm P}_{A'}=\frac{1}{4}\left(
\begin{array}{cccccc}
0 & 4 & 0 & 0 & 0 & 0 \\
0 & 2 & 1 & 1 & 0 & 0 \\
0 & 1 & 1 & 1 & 1 & 0 \\
0 & 0 & 2 & 2 & 0 & 0 \\
0 & 0 & 1 & 1 & 2 & 0 \\
0 & 0 & 0 & 0 & 4 & 0 \\
\end{array}\right).
\end{equation*}
The lumpability condition \eqref{G-invariance} has been checked for $\bm P_B$ by Ethier and Lee \cite{EL12a}.  For $\bm P_{A'}$, we can verify \eqref{G-invariance} by observing that, if $(\sigma(1),\ldots,\sigma(N))=(2,3,\ldots,N,1)$, then, after some calculations, $P_{A'}(\eta_\sigma,\zeta_\sigma)=P_{A'}(\eta,\zeta)$. If $(\sigma(1),\break\ldots,\sigma(N))=(N,N-1,\ldots,2,1)$, then the same identity holds.
\end{example}

\subsection{Means and variances}\label{means,numerical}

We saw in Theorems \ref{means,variances,C'} and \ref{means,variances[r,s]} that the means and variances that appear in the SLLNs and CLTs of Sections \ref{SLLN-gameB}--\ref{SLLN-game[r,s]} (namely, $\mu_B^\circ$, $\mu_{(\gamma,1-\gamma)'}^\circ$, $\mu_{[r,s]'}^\circ$, $(\sigma_B^\circ)^2$, $(\sigma_{(\gamma,1-\gamma)'}^\circ)^2$, and $(\sigma_{[r,s]'}^\circ)^2$) are equal to the corresponding quantities defined in terms of the original transition matrices (namely, $\mu_B$, $\mu_{(\gamma,1-\gamma)'}$, $\mu_{[r,s]'}$, $\sigma_B^2$, $\sigma_{(\gamma,1-\gamma)'}^2$, and $\sigma_{[r,s]'}^2$).  We claim that the corresponding quantities defined in terms of the reduced transition matrices (namely, $\bar{\mu}_B$, $\bar{\mu}_{(\gamma,1-\gamma)'}$, $\bar{\mu}_{[r,s]'}$, $\bar{\sigma}_B^2$, $\bar{\sigma}_{(\gamma,1-\gamma)'}^2$, and $\bar{\sigma}_{[r,s]'}^2$) are also equal.  First, we define
\begin{align*}
\bar{\mu}_B&:=\bar{\bm\pi}_B\dot{\bar{\bm P}}_B\bm1,\\
\bar{\mu}_{(\gamma,1-\gamma)'}&:=(1-\gamma)\bar{\bm\pi}_{C'}\dot{\bar{\bm P}}_B\bm1,\\
\bar{\mu}_{[r,s]'}&:=\frac{1}{r+s}\sum_{v=0}^{s-1}\bar{\bm\pi}\bar{\bm P}_{A'}^r\bar{\bm P}_B^v\dot{\bar{\bm P}}_B\bm1,\\
\bar{\sigma}_B^2&:=\bar{\bm\pi}_B\ddot{\bar{\bm P}}_B\bm 1-(\bar{\bm\pi}_B\dot{\bar{\bm P}}_B\bm 1)^2+2\bar{\bm\pi}_B\dot{\bar{\bm P}}_B(\bar{\bm Z}_B-\bm1\bar{\bm\pi}_B)\dot{\bar{\bm P}}_B\bm 1,\\
\bar{\sigma}_{(\gamma,1-\gamma)'}^2&:=\bar{\bm\pi}_{C'}\ddot{\bar{\bm P}}_{C'}\bm 1-(\bar{\bm\pi}_{C'}\dot{\bar{\bm P}}_{C'}\bm 1)^2+2\bar{\bm\pi}_{C'}\dot{\bar{\bm P}}_{C'}(\bar{\bm Z}_{C'}-\bm1\bar{\bm\pi}_{C'})\dot{\bar{\bm P}}_{C'}\bm 1,\\
\bar{\sigma}_{[r,s]'}^2
&:=\frac{1}{r+s}\bigg\{s-\sum_{v=0}^{s-1}(\bar{\bm\pi}\bar{\bm P}_{A'}^r\bar{\bm P}_B^v\dot{\bar{\bm P}}_B\bm1)^2\nonumber\\
&\qquad\qquad{}+2\bigg[\sum_{0\le u<v\le s-1}\bar{\bm\pi}\bar{\bm P}_{A'}^r\bar{\bm P}_B^u\dot{\bar{\bm P}}_B(\bar{\bm P}_B^{v-u-1}-\bm1\bar{\bm\pi}\bar{\bm P}_{A'}^r\bar{\bm P}_B^v)\dot{\bar{\bm P}}_B\bm1\nonumber\\
&\qquad\qquad\qquad{}+\sum_{u=0}^{s-1}\sum_{v=0}^{s-1}\bar{\bm\pi}\bar{\bm P}_{A'}^r\bar{\bm P}_B^u\dot{\bar{\bm P}}_B\bar{\bm P}_B^{s-u-1}(\bar{\bm Z}-\bm1\bar{\bm\pi})\bar{\bm P}_{A'}^r\bar{\bm P}_B^v\dot{\bar{\bm P}}_B\bm1\bigg]\bigg\}.
\end{align*}

\begin{theorem}
$$
\mu_B=\bar{\mu}_B,\qquad \mu_{(\gamma,1-\gamma)'}=\bar{\mu}_{(\gamma,1-\gamma)'},\qquad \mu_{[r,s]'}=\bar{\mu}_{[r,s]'}
$$
and
$$
\sigma_B^2=\bar{\sigma}_B^2,\qquad \sigma_{(\gamma,1-\gamma)'}^2=\bar{\sigma}_{(\gamma,1-\gamma)'}^2,\qquad \sigma_{[r,s]'}^2=\bar{\sigma}_{[r,s]'}^2.
$$
\end{theorem}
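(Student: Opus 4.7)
The plan is to reduce every identity to a general \emph{commuting-diagram} statement: the map ``column-sum within each equivalence class'' converts the formulas on $\Sigma=\{0,1\}^N$ into the formulas on the quotient $\bar\Sigma$. For a matrix $\bm M$ indexed by $\Sigma\times\Sigma$ that satisfies the $G$-invariance condition \eqref{G-invariance} (with $G$ the cyclic group, or the dihedral group if $p_1=p_2$), the quantity $\sum_{\zeta'\in[\zeta]}M(\eta,\zeta')$ depends on $\eta$ only through $[\eta]$, so we may define $\bar{\bm M}$ on $\bar\Sigma\times\bar\Sigma$ by $\bar M([\eta],[\zeta]):=\sum_{\zeta'\in[\zeta]}M(\eta,\zeta')$. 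By the lemma of Section~3.1, the transition matrices $\bm P_B,\bm P_{A'},\bm P_{C'}$ are $G$-invariant; one checks that $G$-invariance is preserved by sums, scalar multiples, Hadamard products with $\bm W$, and the rule-of-thumb substitution $q_m\mapsto -q_m$, so $\dot{\bm P}_B,\ddot{\bm P}_B,\dot{\bm P}_{C'},\ddot{\bm P}_{C'}$ are $G$-invariant as well. Since the rule of thumb and Hadamard product with $\bm W$ are entrywise linear operations in the $p_m,q_m$, they commute with the column-sum map, so $\overline{\dot{\bm P}_B}=\dot{\bar{\bm P}}_B$, $\overline{\ddot{\bm P}_B}=\ddot{\bar{\bm P}}_B$, and analogously for $\bm P_{C'}$.

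The key algebraic fact I would establish next is that the column-sum map is multiplicative on $G$-invariant matrices, i.e., $\overline{\bm M_1\bm M_2}=\bar{\bm M}_1\bar{\bm M}_2$. The computation is
\begin{equation*}
\sum_{\zeta'\in[\zeta]}(\bm M_1\bm M_2)(\eta,\zeta')=\sum_{[\xi]}\sum_{\xi'\in[\xi]}M_1(\eta,\xi')\sum_{\zeta'\in[\zeta]}M_2(\xi',\zeta'),
\end{equation*}
and by $G$-invariance of $\bm M_2$ the inner right sum depends on $\xi'$ only through $[\xi']=[\xi]$, which gives the claim. Iterating yields $\overline{\bm P_B^v}=\bar{\bm P}_B^v$, $\overline{\bm P_{A'}^r\bm P_B^v}=\bar{\bm P}_{A'}^r\bar{\bm P}_B^v$, etc. Since $\bar{\bm I}=\bm I$ and $\overline{\bm1\bm\pi}=\bm1\bar{\bm\pi}$ (the latter because $\sum_{\zeta'\in[\zeta]}\pi(\zeta')=\bar\pi([\zeta])$ by the lemma), applying the column-sum map to $\bm I-(\bm P_B-\bm1\bm\pi_B)$ and inverting gives $\overline{\bm Z_B}=\bar{\bm Z}_B$, and similarly for $\bm Z_{C'}$ and $\bm Z$.

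For the left side, the lemma gives $\pi(\eta)=\bar\pi([\eta])/|[\eta]|$; hence for any $G$-invariant $\bm M$ and any function $f$ that is constant on equivalence classes (with reduction $\bar f$),
\begin{equation*}
\bm\pi\bm M f=\sum_{[\eta]}\sum_{[\zeta]}\frac{\bar\pi([\eta])}{|[\eta]|}\cdot|[\eta]|\,\bar M([\eta],[\zeta])\bar f([\zeta])=\bar{\bm\pi}\bar{\bm M}\bar f.
\end{equation*}
Since $\bm 1$ is trivially constant on equivalence classes with $\bar{\bm 1}=\bm 1$, and since the right-multiplicand $\dot{\bm P}_B\bm1$ (or similar) is the reduction of $\dot{\bar{\bm P}}_B\bm1$ by the multiplicative property, all the mean and variance formulas of Sections~\ref{SLLN-gameB}--\ref{SLLN-game[r,s]} translate term by term: $\mu_B=\bar\mu_B$, $\mu_{(\gamma,1-\gamma)'}=\bar\mu_{(\gamma,1-\gamma)'}$, $\mu_{[r,s]'}=\bar\mu_{[r,s]'}$, and likewise for the variances (where squared scalars like $(\bm\pi\bm P_A^u\dot{\bm P}_A\bm1)^2$ are preserved because they are numbers, not matrices, and the matrix factor $(\bm Z-\bm1\bm\pi)$ reduces as noted above).

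The main obstacle is bookkeeping rather than mathematics: one must verify that every matrix appearing in the long variance formula of Section~\ref{SLLN-game[r,s]} is $G$-invariant (so that the column-sum reduction applies) and that no $G$-invariance is broken by the rule-of-thumb substitution, which is why the rule must be applied \emph{before} using $q_m=1-p_m$. Once the multiplicative property and the identity $\overline{\bm1\bm\pi}=\bm1\bar{\bm\pi}$ are in hand, each of the six identities is a direct term-by-term match.
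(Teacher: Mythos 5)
Your proposal is correct and follows essentially the same route as the paper: the paper's proof simply cites the Ethier--Lee identity $\bm\pi\bm Q\bm1=\bar{\bm\pi}\bar{\bm Q}\bm1$ for $G$-invariant $\bm Q$ and applies it repeatedly, while you supply the supporting details (multiplicativity of the class-wise column-sum map on $G$-invariant matrices, $\overline{\bm1\bm\pi}=\bm1\bar{\bm\pi}$, reduction of the fundamental matrix, and compatibility with the rule of thumb) that the paper delegates to the citation. No gaps; the extra care about applying the rule of thumb before simplifying via $q_m=1-p_m$ is exactly the caution the paper itself flags.
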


\begin{proof}
A result of Ethier and Lee \cite{EL12b} implies that, if $\bm Q$ is a $G$-invariant square (not necessarily stochastic) matrix (i.e., $Q(\eta_\sigma,\zeta_\sigma)=Q(\eta,\zeta)$ for all $\eta,\zeta\in\{0,1\}^N$ and all $\sigma\in G$), then 
$$
\bm\pi\bm Q\bm1=\bar{\bm\pi}\bar{\bm Q}\bm1.
$$
Repeated application of this identity gives the desired conclusions.
\end{proof}

The formulas for the means with bars are computable for $3\le N\le 18$, at least.  We give partial results for Toral's \cite{T01} choice of the parameter vector $(p_0,p_1,p_2,p_3)$ in Table \ref{ex2}.  The formulas for the variances with bars are perhaps computable for $3\le N\le 12$, but we do not include them here.  

\subsection{Computer graphics}
Ethier and Lee \cite{EL15} sketched, for games $A$, $B$, and $C:=\frac12 A+\frac12 B$, the Parrondo and anti-Parrondo regions when $3\le N\le 9$.  They assumed that $p_1=p_2$ and relabeled $p_3$ as $p_2$.  In other words, their parameter vector was of the form $(p_0,p_1,p_1,p_2)$.  (The reason for this simplification is that a three-dimensional figure is easier to visualize than a four-dimensional figure.)   The figures for games $A'$, $B$, and $C':=\frac12 A'+\frac12 B$ are distinctively different from those for games $A$, $B$, and $C$.  In both cases, the general shape of the Parrondo and anti-Parrondo regions does not change much, once $N\ge5$.  We illustrate in the case $r=1$ and $s=2$ in Figure~\ref{region_A'BB}.

\begin{table}[htb]
\caption{\label{ex2}Mean profit per turn at equilibrium in the games of Toral \cite{T01} and Xie et al.~\cite{XC11}, assuming $(p_0,p_1,p_2,p_3)=(1,4/25,4/25,7/10)$.  Results are given to six significant digits.  The entries corresponding to $N=\infty$ are limits as $N\to\infty$ (see Theorem~\ref{periodic-limit}).}
\catcode`@=\active \def@{\hphantom{0}}
\catcode`#=\active \def#{\hphantom{$-$}}
\tabcolsep=.03cm
\begin{small}
\begin{center}
\begin{tabular}{ccccccc}
\noalign{\smallskip}
\multicolumn{7}{c}{mean profit per turn, Toral's games}\\
\noalign{\smallskip}
\hline
\noalign{\smallskip}
$N$@ &$B$ & $\frac12(A+B)$ & $AB$ & $ABB$ & $AAB$ & $AABB$ \\
\noalign{\smallskip}
\hline
\noalign{\smallskip}
@3@ & $-0.0909091@@$ & $-0.0183774@$ & $-0.00695879$ & $-0.0274821@$ & #0.000672486 & $-0.0148718$@ \\
@6@ & $-0.0189247@@$ & #0.00463310 & #0.00497503 & #0.00590528 & #0.00325099@  & #0.00498178 \\
@9@ & $-0.00189233@$ & #0.00479036 & #0.00493507 & #0.00598135 & #0.00327802@  & #0.00493728 \\
12@ & $-0.000676916$ & #0.00479089 & #0.00490464 & #0.00586697 & #0.00328800@  & #0.00490531 \\
15@ & $-0.000586184$ & #0.00479089 & #0.00488431 & #0.00579891 & #0.00329249@  & #0.00488449 \\
18@ & $-0.000579652$ & #0.00479089 & #0.00486999 & #0.00575438 & #0.00329483@  & #0.00487001 \\
\noalign{\smallskip}
$\infty$@ &          & #0.00479089 & #0.00479089 & #0.00554084 & #0.00329853@  & #0.00479089 \\
\noalign{\smallskip}
\hline
\noalign{\bigskip}
\multicolumn{7}{c}{mean profit per turn, Xie at al.'s games}\\
\noalign{\smallskip}
\hline
\noalign{\smallskip}
$N$@ & $B$ & $\frac12(A'+B)$ & $A'B$ & $A'BB$ & $A'A'B$ & $A'A'BB$ \\
\noalign{\smallskip}
\hline
\noalign{\smallskip}
@3@ & $-0.0909091@@$ & $-0.0766158@$& $-0.105479@@$ & $-0.102038@@$ & $-0.0724638@$ & $-0.0773252$@\\
@6@ & $-0.0189247@@$ & #0.00671656 & #0.00640351 & #0.00955597 & #0.00363075 & #0.00745377 \\
@9@ & $-0.00189233@$ & #0.00678314 & #0.00676079 & #0.00887095 & #0.00402382 & #0.00705972 \\
12@ & $-0.000676916$ & #0.00678336 & #0.00682799 & #0.00860524 & #0.00419181 & #0.00695667 \\
15@ & $-0.000586184$ & #0.00678336 & #0.00684381 & #0.00845891 & #0.00427852 & #0.00691300 \\
18@ & $-0.000579652$ & #0.00678336 & #0.00684607 & #0.00836539 & #0.00433011 & #0.00688859 \\
\noalign{\smallskip}
$\infty$@ &          &#0.00678336& #0.00678336 & #0.00792947 & #0.00451510 &   #0.00678336 \\
\noalign{\smallskip}
\hline
\end{tabular}
\end{center}
\end{small}
\end{table}
%\afterpage{\clearpage}

\section{Convergence of means}

Computations suggest that $\mu_{(\gamma,1-\gamma)'}^N$ and $\mu_{[r,s]'}^N$ converge as $N\to\infty$, regardless of the parameters $p_0,p_1,p_2,p_3\in(0,1)$.  We cannot prove this, but we can give sufficient conditions on the parameters for this convergence to hold.  These are
\begin{align}\label{erg-basic}
&\max\bigg[ \bigg|\frac{\gamma}{2}+(1-\gamma)(p_0-p_1)\bigg|,\bigg|\frac{\gamma}{2}+(1-\gamma)(p_2-p_3)\bigg|\bigg]\nonumber\\ 
&\qquad + \max\bigg[\bigg|\frac{\gamma}{2}+(1-\gamma)(p_0-p_2)\bigg|,\bigg|\frac{\gamma}{2}+(1-\gamma)(p_1-p_3)\bigg|\bigg]<1.
\end{align}

\begin{theorem}\label{periodic-limit}
Fix integers $r,s\ge1$ and put $\gamma:=r/(r+s)$.  If \eqref{erg-basic} holds,  then $\lim_{N\to\infty}\mu_{(\gamma,1-\gamma)'}^N$ exists, and $\lim_{N\to\infty}\mu_{[r,s]'}^N=\lim_{N\to\infty}\mu_{(\gamma,1-\gamma)'}^N$.
\end{theorem}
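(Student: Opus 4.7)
The plan is to realise the finite-$N$ chains as finite-volume projections of an infinite-volume interacting particle system (IPS) on $\{0,1\}^{\Z}$, prove that this IPS has a unique invariant measure $\pi_\infty$ under \eqref{erg-basic}, and deduce that the two sequences of means both converge to the same $\mu_\infty=\E_{\pi_\infty}[\cdot]$ of a common local observable. To construct the IPS, rescale time in the $N$-cycle chains by a factor of $N$: per unit of rescaled time, each edge $\{x,x{+}1\}$ of $\Z$ carries an independent Poisson clock of rate $\gamma$ that resets the edge spins to $(U,1-U)$ with $U\sim{\rm Bernoulli}(1/2)$ ($A'$-move), and each site $x$ carries an independent Poisson clock of rate $1-\gamma$ that sets $\eta(x)\sim{\rm Bernoulli}(p_{m_x(\eta)})$ ($B$-move). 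Liggett's standard construction yields a Feller process on $\{0,1\}^{\Z}$ with generator $L:=\gamma L_{A'}+(1-\gamma)L_B$, the common scaling-limit candidate of both $\bm P_{C'}^N$ and $(\bm P_{A'}^N)^r(\bm P_B^N)^s$.

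The crux is ergodicity of $L$ under \eqref{erg-basic}. Couple two copies $\eta_t,\zeta_t$ by a common graphical representation, using a maximal coupling for each $B$-move and a tuned (not same-coin) coupling of each $A'$-move; set $D_t(x):=\mathbf 1\{\eta_t(x)\neq\zeta_t(x)\}$. A per-turn case analysis of disagreement propagation produces a linear contraction
\[
\E[D_{t+1/N}(x)\mid\mathcal F_t]\le\alpha_- D_t(x-1)+\alpha_0 D_t(x)+\alpha_+ D_t(x+1),
\]
where $\alpha_\pm$ are exactly the two maxima appearing on the left-hand side of \eqref{erg-basic}. Each $\alpha_\pm$ combines the $A'$-contribution $\gamma/2$ and the $B$-contribution $(1-\gamma)|p_i-p_j|$ inside a single absolute value, because the $A'$-coupling is arranged so that its disagreement-creation contribution and that of $B$ share an algebraic sign and cancel within the absolute value. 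Condition \eqref{erg-basic} is then precisely the statement that $\alpha_-+\alpha_0+\alpha_+<1$, which gives exponential decay of $\E[D_t(x)]$ in $\ell^\infty$ and hence a unique invariant measure $\pi_\infty$ for $L$.

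Unique ergodicity of $L$ is exploited in two ways. For the random mixture, coupling an $N$-cycle copy of the chain to a $\Z$-copy of the IPS (within the common graphical construction) shows that the finite-window marginals of $\bm\pi_{C'}^N$ converge to those of $\pi_\infty$: the periodic boundary effect propagates at bounded speed in rescaled time, so a fixed window becomes insensitive to it as $N\to\infty$. For the periodic pattern, a Trotter-type argument identifies the limit of $((\bm P_{A'}^N)^r(\bm P_B^N)^s)^{N/(r+s)}$ with $e^{L}$, because each period of the alternating schedule has length $(r+s)/N\to 0$; hence $\bm\pi_{[r,s]'}^N$ has the same infinite-volume limit $\pi_\infty$. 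Since both $\mu_{(\gamma,1-\gamma)'}^N$ and $\mu_{[r,s]'}^N$ are the expectation under the respective invariant measure of the same local observable (depending only on $\eta(-1),\eta(0),\eta(1)$), the two sequences share the limit $\mu_\infty$.

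The substantive work is the coupling step: designing the $A'$-coupling so that its per-turn contribution combines with the $B$-coupling's contribution to produce precisely the constants of \eqref{erg-basic}, including the $\gamma/2$ offsets inside the absolute values, is non-routine and requires a careful case analysis over the $2\times 2$ possibilities for which of $\eta(x-1),\eta(x+1)$ disagree with $\zeta(x-1),\zeta(x+1)$. The other ingredients — finite-volume-to-infinite-volume convergence via the graphical construction, the Trotter-type identification of the periodic limit with the random-mixture limit, and the observation that the means are $\pi_\infty$-expectations of local observables — are standard IPS machinery once the contraction is established.
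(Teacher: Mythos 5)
The paper states Theorem~\ref{periodic-limit} without proof (the text points to the method of Ethier and Lee \cite{EL13a} for Toral's model), so there is nothing of the paper's to compare against line by line; your architecture --- realize the $N$-cycle chains, after speeding time up by a factor of $N$, as finite-volume versions of an interacting particle system on $\{0,1\}^{\Z}$ with generator $\gamma L_{A'}+(1-\gamma)L_B$, prove unique ergodicity under \eqref{erg-basic}, and pull both sequences of means back to the $\pi_\infty$-expectation of the local observable $(1-\gamma)(2p_{m_0(\eta)}-1)$ --- is exactly the intended one. The identification of the means as expectations of a common local function, and the first-order-in-$1/N$ argument that makes the $[r,s]$ schedule and the $(\gamma,1-\gamma)$ mixture indistinguishable in the limit, are fine in outline.

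The gap is the coupling step, which you yourself flag as the substantive work but then only assert. Two concrete problems. First, the mechanism you describe cannot produce the constants of \eqref{erg-basic}: an $A'$ update of the edge $\{x,x+1\}$ resets the pair to $(U,1-U)$ \emph{independently of the current configuration}, so under the same-randomness coupling it forces agreement on both endpoints and creates no disagreement; a ``tuned (not same-coin) coupling'' that instead deliberately creates disagreements at rate $\gamma/2$ so as to ``cancel inside the absolute value'' is never specified, and it is unclear why one would want one. (Indeed \eqref{erg-basic} fails at $\gamma=1$, where the pure pair-resampling dynamics is manifestly ergodic; so \eqref{erg-basic} is not the output of a near-optimal coupling of the true dynamics, but rather has the shape of an $M<\varepsilon$/Dobrushin-type criterion for an associated spin system in which the $A'$ move is encoded with a signed per-neighbor influence of $\gamma/2$.) Second, your displayed inequality and the claim that \eqref{erg-basic} is ``precisely $\alpha_-+\alpha_0+\alpha_+<1$'' do not parse: \eqref{erg-basic} has two terms and $\alpha_0$ is never identified. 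The plan does survive with the honest coupling: same randomness for $A'$ (pure contraction, at rate $2\gamma$ per site) and maximal coupling for the $B$ resampling give $\frac{d}{dt}\E[D_t(x)]\le-(1+\gamma)\E[D_t(x)]+(1-\gamma)\bigl(\beta_-\E[D_t(x-1)]+\beta_+\E[D_t(x+1)]\bigr)$ with $\beta_+=\max(|p_0-p_1|,|p_2-p_3|)$ and $\beta_-=\max(|p_0-p_2|,|p_1-p_3|)$, hence ergodicity whenever $(1-\gamma)(\beta_-+\beta_+)<1+\gamma$; since $(1-\gamma)|p_i-p_j|\le|\frac{\gamma}{2}+(1-\gamma)(p_i-p_j)|+\frac{\gamma}{2}$, this weaker condition is implied by \eqref{erg-basic}, and the theorem would follow. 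But that derivation, the uniformity in $N$ needed to transfer the contraction to the finite cycles and to the time-inhomogeneous $[r,s]$ schedule, and the passage from process-level convergence to convergence of the stationary distributions all have to be written out; as it stands the crux of the proof is a placeholder resting on a coupling that does not exist as described.
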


The volume of the subset of the parameter space $[0,1]^4$ for which \eqref{erg-basic} holds with $\gamma=1/2$ is, by \textit{Mathematica}, 5/6.  If we assume that $p_1=p_2$, then the volume of the subset of the parameter space $[0,1]^3$ for which \eqref{erg-basic} holds is, by \textit{Mathematica}, 3/4.  In fact, we plot the three-dimensional volume as a function of $\gamma$ in Figure \ref{vol-plot}.

\begin{figure}[htb]
    \begin{center}
	\includegraphics[width=4in]{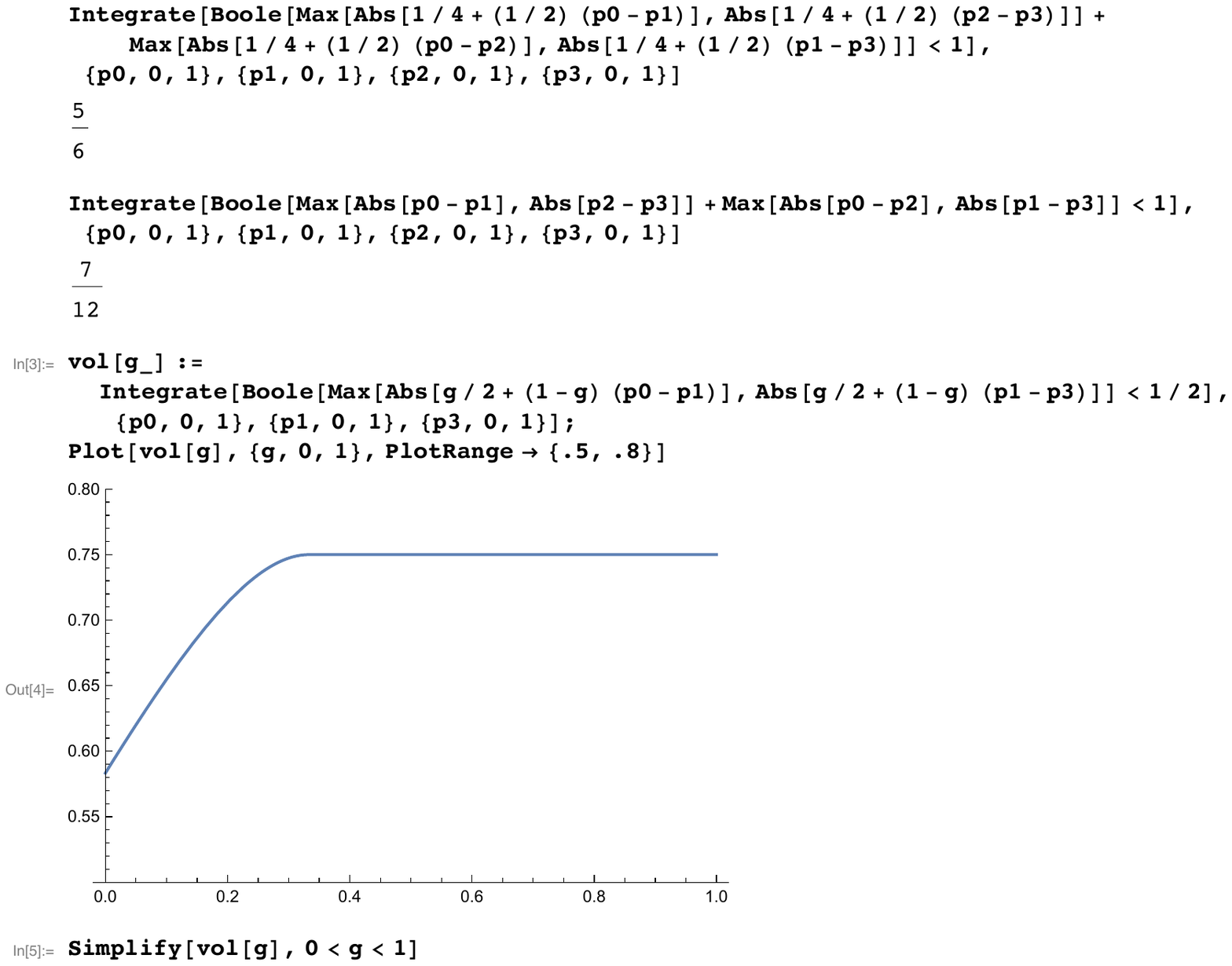}
	\caption{Assuming $p_1=p_2$, the three-dimensional volume of the
		subset of the parameter space for which \eqref{erg-basic} holds
		is plotted as a function of $\gamma$.}\label{vol-plot}
    \end{center}
\end{figure}

Notice that the volume is $3/4$ if and only if $\gamma \ge 1/3$.

\begin{figure}[htb]
\newcommand{\pictwidth}{0.48\textwidth}
\centering
\includegraphics[width = \pictwidth]{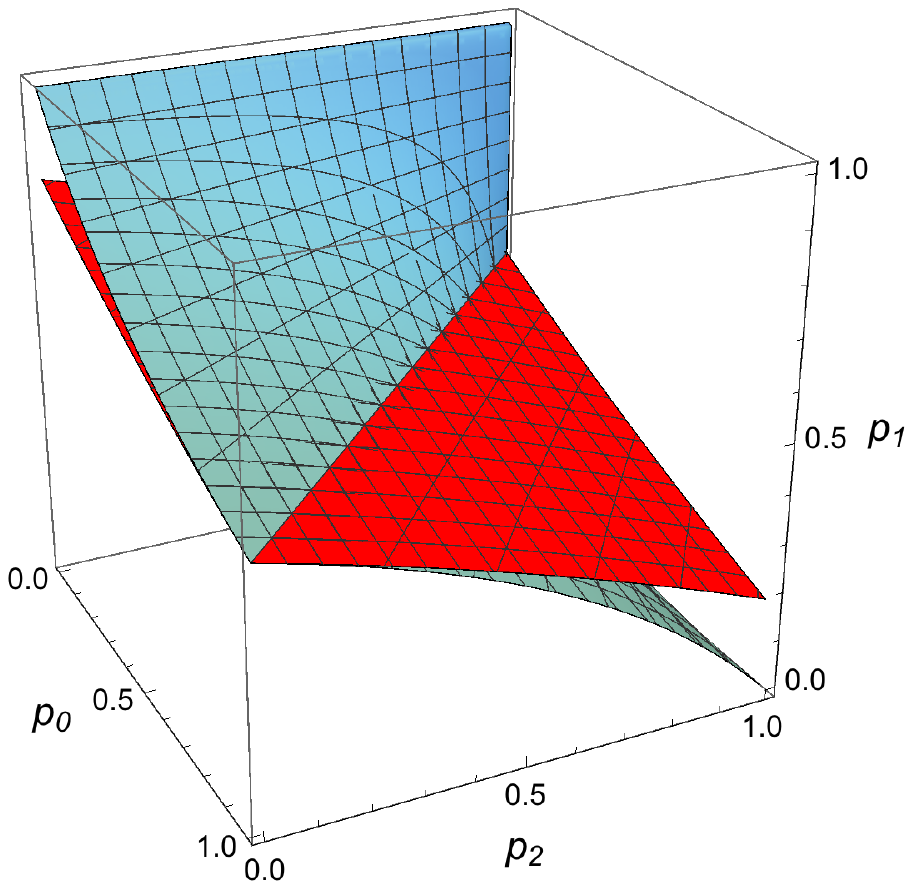}\quad
\includegraphics[width = \pictwidth]{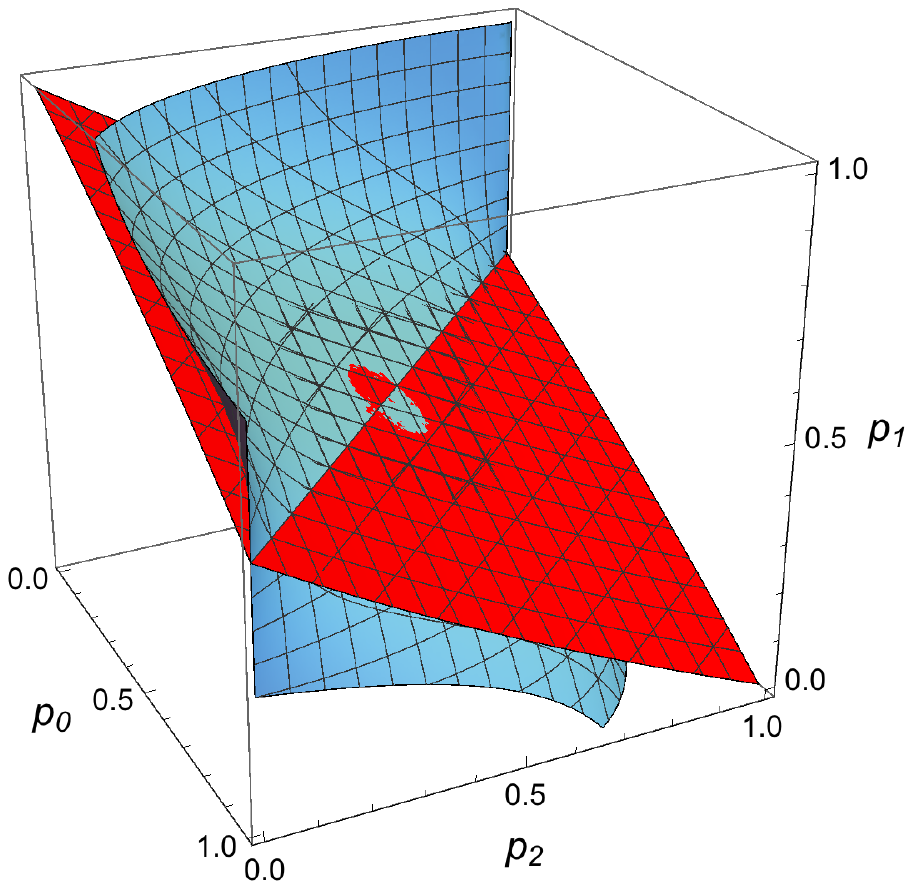}\\
\footnotesize $N=3$ \hspace{1.7in} $N=4$  \\
\bigskip
\includegraphics[width = \pictwidth]{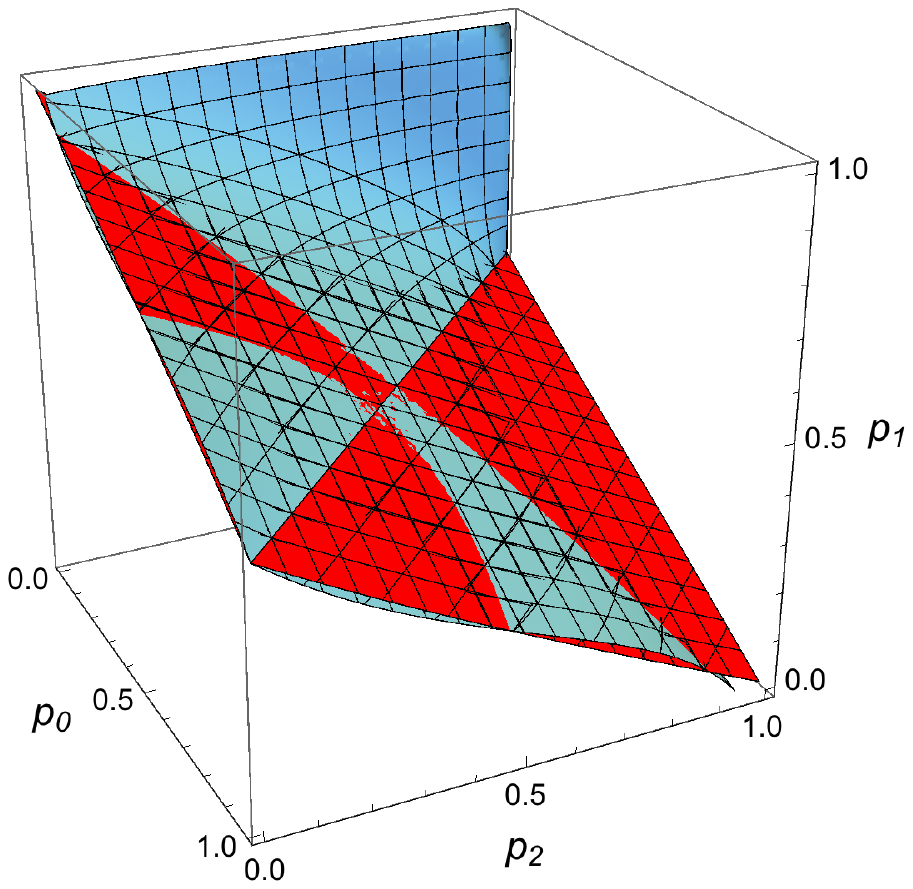}\quad
\includegraphics[width = \pictwidth]{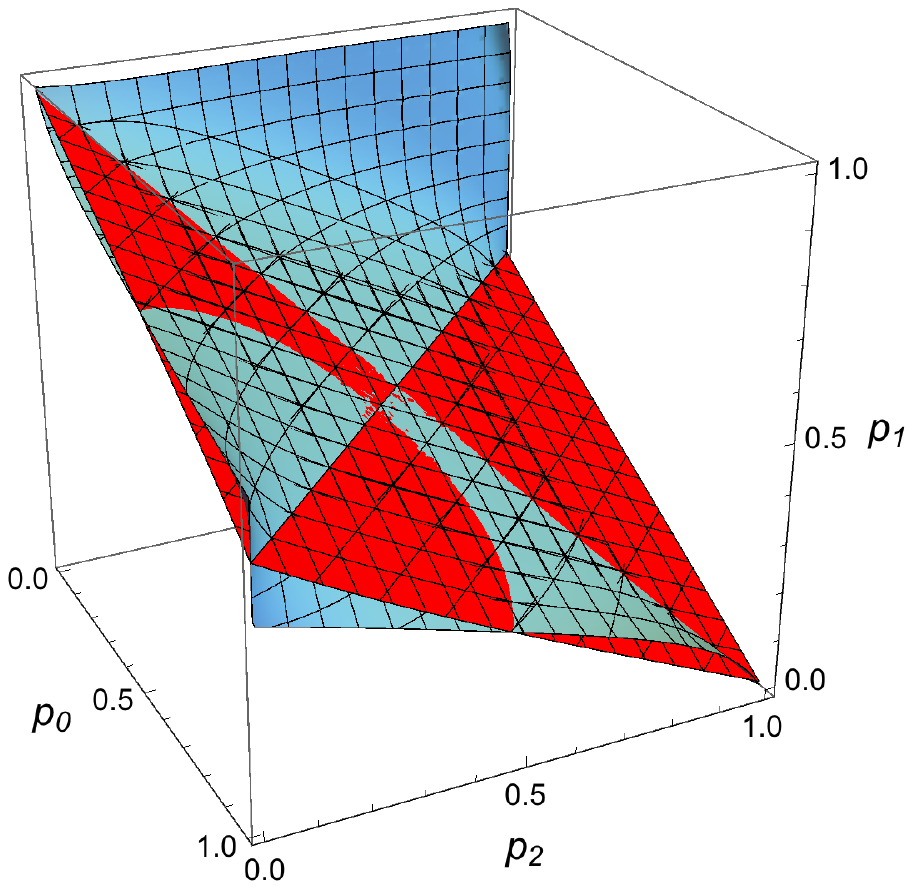}\\
\footnotesize $N=5$  \hspace{1.7in} $N=6$  \\
\caption[The Parrondo region for $C':=A'BB$.]{\label{region_A'BB}For $3\le N\le 6$, the blue surface is the surface $\mu_B=0$, and the red surface is the surface $\mu_{[1,2]'}=0$, in the $(p_0,p_2,p_1)$ unit cube.  The Parrondo region is the region on or below the blue surface and above the red surface, while the anti-Parrondo region is the region on or above the blue surface and below the red surface.  Here $(p_0,p_1,p_1,p_3)$ is relabeled as $(p_0,p_1,p_1,p_2)$.}
\end{figure}


\begin{thebibliography}{00}

\bibitem{AP92} A. Ajdari and J. Prost, Drift induced by a spatially periodic potential of low symmetry: Pulsed dielectrophoresis, \textit{C. R. Acad. Sci., Ser. II} \textbf{315} (1992) 1635--1639.

\bibitem{T01} R. Toral, Cooperative Parrondo games, \textit{Fluct. Noise Lett.} \textbf{1} (2001) L7--L12.

\bibitem{XC11} N.-G. Xie, Y. Chen, Y. Ye, G. Xu, L.-G. Wang and C. Wang, Theoretical analysis and numerical simulation of Parrondo's paradox game in space, \textit{Chaos Solitons Fractals} \textbf{44} (2011) 401--414.

\bibitem{EL15} S. N. Ethier and J. Lee, Parrondo games with spatial dependence, III, \textit{Fluct. Noise Lett.} \textbf{14} (2015) 1550039.

\bibitem{EL12a} S. N. Ethier and J. Lee, Parrondo games with spatial dependence, \textit{Fluct. Noise Lett.} \textbf{11} (2012) 1250004.

\bibitem{EL12b} S. N. Ethier and J. Lee, Parrondo games with spatial dependence, II, \textit{Fluct. Noise Lett.} \textbf{11} (2012) 1250030.

\bibitem{EL13a} S. N. Ethier and J. Lee, Parrondo games with spatial dependence and a related spin system, \textit{Markov Process. Relat. Fields} \textbf{19} (2013) 163--194.

\bibitem{EL13b} S. N. Ethier and J. Lee, Parrondo games with spatial dependence and a related spin system, II, \textit{Markov Process. Relat. Fields} \textbf{19} (2013) 667--692.

\bibitem{LY14} Y.-F. Li, S.-Q. Ye, K.-X. Zheng, N.-G. Xie, Y. Ye and L. Wang, A new theoretical analysis approach for a multi-agent spatial Parrondo’s game, \textit{Phys. A} \textbf{407} (2014) 369--379.

\bibitem{EL09} S. N. Ethier and J. Lee, Limit theorems for Parrondo’s paradox, \textit{Electron. J. Probab.} \textbf{14} (2009) 1827--1862.

\bibitem{MR03} Z. Mihailovi\'c and M. Rajkovi\'c, One dimensional asynchronous cooperative Parrondo's games, \textit{Fluct. Noise Lett.} \textbf{3} (2003) L389--L398.

\bibitem{EL17} S. N. Ethier and J. Lee, Parrondo games with two-dimensional spatial dependence, \textit{Fluct. Noise Lett.} \textbf{16} (2017) (1750005).

\bibitem{KS76} J. G. Kemeny and J. L. Snell, \textit{Finite Markov Chains}, 2nd Ed. (Springer-Verlag, New York, 1976).

\bibitem{S19} N. J. A. Sloane, \textit{The On-Line Encyclopedia of Integer Sequences}, \url{http://oeis.org/}, 2019.


\end{thebibliography}
\end{document}